%
\documentclass[12pt]{article}

\usepackage{a4wide}
\usepackage{amsmath,amssymb,amsfonts,amsthm} 
\usepackage{color}

\usepackage{mathrsfs,latexsym  }


\addtolength\oddsidemargin{-5mm}
\addtolength\evensidemargin{-5mm}
\addtolength\textwidth{20mm}

\addtolength\topmargin{-10mm}
\addtolength\textheight{10mm}



\newtheorem{Thm}{Theorem}[section]

\newtheorem{Lem}[Thm]{Lemma}
\newtheorem{Prop}[Thm]{Proposition}

\newcommand{\II}{{\mathbb I}}
\newcommand{\CC}{{\mathbb C}}

\newcommand{\RR}{{\mathbb R}}

\newcommand{\NN}{{\mathbb N}}

\newcommand{\ProrL}{{\mathscr{L}_+^\uparrow}}

\newcommand{\ProrW}{{\mathscr{W}_+^\uparrow}}


\numberwithin{equation}{section}



\newcommand{\Ac}{{\mathcal A}}
\newcommand{\Bc}{{\mathcal B}}

\newcommand{\Dc}{{\mathcal D}}

\newcommand{\Hc}{{\mathcal H}}

\newcommand{\Ba}{{\cal B}}

\newcommand{\Oa}{{\cal O}}

\newcommand{\Wa}{{\cal W}}


\newcommand{\yb}{{\boldsymbol{y}}}




\newcommand{\Mc}{{\mathcal{M}}}

\newcommand{\Wc}{{\mathcal{W}}}










\def\bio{{\mbox{\boldmath $o$}}}
\def\bis{{\mbox{\boldmath $s$}}}
\def\biss{{\mbox{\footnotesize \boldmath $s$}}}
\def\binull{{\mbox{\boldmath $0$}}}
\def\bisnull{{\mbox{\footnotesize \boldmath $\, o$}}}
\def\bix{{\mbox{\boldmath $x$}}}
\def\bisx{{\mbox{\footnotesize \boldmath $x$}}}
\def\binull{{\mbox{\boldmath $0$}}}
\def\biy{{\mbox{\boldmath $y$}}}
\def\bisy{{\mbox{\footnotesize \boldmath $\, y$}}}

\def\biL{{\mbox{\boldmath $L$}}}

\def\bieta{{\mbox{\boldmath $\eta$}}}

\def\Dbox{{\Dc {}_{\scriptscriptstyle \square}}}

\def\sfa{{\sf a}}

\def\ie{{\it i.e.\ }}
\def\viz{{\it viz.\ }}


\begin{document}

%

\title{Macroscopic aspects of the Unruh effect}
\author{Detlev Buchholz${}^{(a)}$ \ \ and \ \ Rainer Verch${}^{(b)}$ \\[20pt]
\small 
${}^{(a)}$ Institut f\"ur Theoretische Physik, Universit\"at G\"ottingen, \\
\small Friedrich-Hund-Platz 1, 37077 G\"ottingen, Germany\\[5pt]
\small
${}^{(b)}$ Institut f\"ur Theoretische Physik, Universit\"at Leipzig, \\
\small Br\"uderstr.\ 16, 04103 Leipzig, Germany}
\date{}

\maketitle

\begin{abstract}
\noindent Macroscopic concepts pertaining 
to the Unruh effect are elaborated and used 
to clarify its physical manifestations. 
Based on a description of the motion of accelerated,    
spatially extended laboratories in Minkowski space 
in terms of Poincar\'e 
transformations, it is shown that, from a macroscopic 
perspective, an accelerated observer will not register 
with his measuring instruments any global thermal effects of 
acceleration in the inertial (Minkowskian) 
vacuum state. 
As is explained, this result is not 
in conflict with the well--known fact that microscopic 
probes used as thermometers  
respond non--trivially to acceleration if coupled 
to the vacuum. But this response cannot be 
interpreted as the effect of some exchange of thermal energy 
with a gas surrounding the observer; in fact, 
it is induced by the measuring process itself.
It is also shown that genuine equilibrium states 
in a uniformly accelerated laboratory cannot be spatially
homogeneous. In particular, these states 
coincide with the homogeneous inertial vacuum at sufficiently large 
distances from the horizon of the observer and consequently have 
the same (zero) temperature there. The analysis is carried out 
in the theory of a free massless scalar field;
however the conclusion that the Unruh effect is not 
of a macroscopic thermal origin is generally valid.
\end{abstract}

\section{Introduction}
\setcounter{equation}{0}

In spite of the fact that the computational aspects of the 
Unruh effect have been extensively studied 
and are by now well understood \cite{Unruh,Ta,Wald}, 
there has not yet emerged a consensus on its 
proper physical interpretation, cf.\ for example \cite{CoFo,Earman,BuSo}.
It seems desirable to settle this matter in order to 
gain clarity as to what the theory actually predicts about  
the as yet unattainable experimental situation~\cite{CrHiMa}. 
In the present article
we elaborate theoretical concepts whose operational 
significance in the context of the Unruh effect is evident 
but whose mathematical description requires some care. 
Having clarified these concepts we will reconsider the 
theoretical predictions pertaining to the Unruh effect and  
come to conclusions that corroborate its dissenting 
interpretation brought forward in \cite{BuSo}.   

\medskip 
\noindent (i) The first point in the context of the Unruh effect
which requires some thoughts is the
fact that measuring devices are spatially extended. 
The popular idealisation
that measurements are performed along world lines with 
comoving clocks attached to each line, indicating   
the respective proper time, corresponds neither to a
realistic experimental situation nor to a meaningful 
theoretical hypothesis since it leads to difficulties 
when dealing with extended observables in the 
Heisenberg picture. Conceptual problems caused by this   
overidealized treatment were already observed by
Bell, Hughes and Leinaas in \cite{BeHuLe}. 
We will therefore rely on 
the following more realistic scenario: A Minkowski space based 
observer enters with his clock a laboratory in a 
spacecraft that, at a given time, is at rest and then undergoes acceleration. 
The experimental 
equipment that he takes along is mounted to the walls of the laboratory
at rest relative to him 
at all times, measured with his clock. Thus there is only one relevant 
time scale within the spatially extended 
laboratory. It is fixed by the clock of the 
observer and the interpretation of observations 
made with his measuring instruments rests on that time scale.
It is understood that the walls of the laboratory are to be rigid in order 
to compensate tidal forces and allow for this standard    
experimental situation. We will discuss this point in the 
subsequent section, where we show that under mild constraints 
on the forces any motion of the laboratory can be described by a 
family of Poincar\'e transformations that is parametrised by the
eigentime of the observer. 

\medskip
\noindent (ii) Next, there is the question at which 
scales observations are performed. 
Being interested in the Unruh effect, the
observer will be led to analyse the macroscopic thermal properties of 
the vacuum in his laboratory.
He does this by subsequent measurements of his observables
in order to suppress microscopic fluctuations,
thereby enhancing those 
features of the state that prevail at asymptotic times and 
hence can be interpreted as macroscopic (superselected) properties
of the state 
\cite{He,Se2}. We will show in Sec.~3 that, irrespective of the 
motion performed by the laboratory, all observables 
form central sequences at asymptotic times  in the 
state space of the Minkowski
vacuum and have sharp (non--fluctuating) limits. 
The numerical values of these limits do not depend on the 
details of the motion and coincide with those found by an inertial 
observer. In other words, an accelerated observer will not 
register any macroscopic thermal effects caused by acceleration.

\medskip
\noindent (iii) Having established the absence of thermal 
effects in the inertial vacuum which are caused by acceleration, 
the observer 
can prepare and study other states that are in equilibrium in his 
laboratory and analyse their macroscopic thermal   
properties. Restricting attention to the case of constant
\mbox{acceleration} we will show in Sec.~4 that all genuine equilibrium
states, characterised by the \mbox{KMS-condition} with respect to the
accelerating dynamics, exhibit macroscopic 
properties that depend on the distance from the 
(apparent) horizon of the observer. Moreover, all equilibrium states 
coincide with the homogeneous 
inertial vacuum if restricted to compact regions 
of arbitrary size at sufficiently
large distances from this horizon. Thus all equilibrium states 
have the same (zero) temperature in these remote regions. 
These facts, akin to the classical Tolman--Ehrenfest effect 
\cite{To,EhTo}, corroborate the assertion in \cite{BuSo} that the 
equilibrium parameters $T$ characterising KMS states 
cannot be interpreted as ``local temperature''  
in the presence of acceleration;    
instead, they subsume information about the 
relation between the temperature and the acceleration that must 
prevail in order to accomplish global equilibrium. 

\medskip
\noindent (iv) Finally, the uniformly accelerated observer can place  
microscopic probes into equilibrium states, prepared in his laboratory, and 
determine at sufficiently large times the accumulated impact  
on the probes caused by the interaction with the  
states. Within the theoretical framework this situation 
is described by Pauli--Fierz type models of a finite dimensional 
quantum system that is weakly coupled to a macroscopic KMS state. 
In these models one can show under mild conditions 
on the underlying dynamics and couplings, applying  
to the case at hand, that the composed system approaches 
at large times a KMS state with the same equilibrium 
parameter $T$ as that of the initially unperturbed KMS state, 
cf.~\cite{DeJa} and references quoted there. A detailed
discussion of the Unruh effect in this setting has been 
given in~\cite{BiMe}. From a physical 
point of view this ``return to equilibrium'' is not  
surprising. Yet this feature does not imply that  
the equilibrium parameter~$T$ displayed by the microscopic 
probe can be interpreted as temperature of the macroscopic state, 
even if one corrects this reading by redshift factors depending
on the position of the probe within the laboratory.
For the interaction between the probe and the state 
does not only induce the desired exchange of thermal energy between the 
two systems; it also creates from this state 
additional excitations due to its local nature 
(cf.\ Reeh--Schlieder theorem \cite[Sec.\ II.5.3]{Haag}).  
These excitations transmit additional energy to the probe,
proportional to the acceleration, leading to higher 
values of $T$ than the actual temperature of the macroscopic 
equilibrium state. So 
probes may not be regarded as perfect thermometers in the presence of 
acceleration since inevitable quantum effects in the measuring process 
affect their readings.

\medskip
We conclude this introduction by defining our notation 
and presenting a model that is commonly used in discussions 
of the Unruh effect. Throughout this article we use
units where $c=\hbar=k=1$. We consider 
four dimensional Minkowski space $\Mc = (\RR^4 , g)$
with proper coordinates $x = (x_0, \bix)$ and metric 
$g$ fixed by the Lorentz scalar product 
$x  \cdot  y \doteq x_0 y_0 - \bix \biy$,
where $\bix \biy$ denotes the Euclidean scalar product of the 
spatial components $\bix, \biy$. 
On~$\Mc$ there acts the extended Poincar\'e group (Weyl group) 
$\ProrW \doteq \RR^4 \rtimes (\ProrL \times \RR_+)$ consisting of 
spacetime translations, proper orthochronous Lorentz transformations
and dilations.  
The product of its elements $\Upsilon = (y,\Lambda,\lambda)$ 
is defined by 
\mbox{$ \Upsilon_1  \Upsilon_2 =
(y_1, \Lambda_1, \lambda_1) (y_2, \Lambda_2, \lambda_2)
\doteq 
(y_1 + \Lambda_1 \lambda_1 y_2,  \Lambda_1 \Lambda_2, \lambda_1  \lambda_2)$}
in an obvious notation. 

\bigskip
The model under consideration is the theory of a free massless 
scalar and hermitian field $\phi$. Since we consider 
several inequivalent
representations of this field that are induced by 
different equilibrium 
states it is convenient to present the theory in terms of 
bounded functions 
\mbox{$W(f) = \exp{(i \phi(f))}$}  
of the field, smeared with real test functions $f$ (Weyl operators). 
The resulting algebraic structures can be described as follows. 
Let $\Dc(\RR^4)$ be the space of real valued test functions
with compact support in~$\RR^4$ and let 
$\Dbox(\RR^4) \doteq \Dc(\RR^4) / \square \, \Dc(\RR^4)$ 
be its quotient with regard to test functions lying in the
kernel of the field (that fulfils the wave equation $\square \phi = 0$,
$\square$ being the d'Alembertian). 
We consider the *--algebra 
$\Ac(\RR^4)$, which is generated by all sums and products of the operators
$W(f)$, \mbox{$f \in \Dbox(\RR^4)$}, satisfying the Weyl relations
$$
W(f) W(g) = e^{\, - \kappa(f,g)/2} \, W(f + g) \, , \quad 
W(f)^* = W(-f) \, ,
$$
where  \ $\kappa(f,g) \doteq (2 \pi)^{-3} \!
\int d^4p \, \varepsilon(p_0) \delta(p^2) \widehat{f}(-p) \widehat{g}(p)$
is the commutator function of the field 
and $\widehat{f}, \widehat{g}$ are the Fourier transforms of 
(any member of the classes) $ f,g \in \Dbox(\RR^4)$.
We also consider the  subalgebras $\Ac(\Oa) \subset \Ac(\RR^4)$,
associated with double cones $\Oa \subset \RR^4$,  
which are generated by the unitaries $W(f)$, \mbox{$f \in \Dbox(\Oa)$}, 
where $\Dbox(\Oa) \subset \Dbox(\RR^4)$ denotes the 
class of test functions 
containing members, which have support in $\Oa$.  It follows immediately 
from the properties of the commutator function that 
all elements of $\Ac(\Oa_1)$ commute with those of $\Ac(\Oa_2)$ 
if the double cones $\Oa_1$, $\Oa_2$ are spacelike 
separated (Einstein causality) or timelike separated
(Huygens' principle). The extended Poincar\'e group acts on 
the algebra~$\Ac(\RR^4)$ by automorphisms. Since
for any $\Upsilon = (y, \Lambda, \lambda) \in \ProrW$ the linear map 
$f(x) \mapsto f_\Upsilon(x) \doteq \lambda^{-3} f(\Lambda^{-1} \lambda^{-1}(x - y))$
on $\Dc(\RR^4)$ leaves the subspace $\square \Dc(\RR^4)$
and the commutator function invariant, the automorphic action is 
consistently defined by 
$\alpha_\Upsilon(W(f)) \doteq W(f_\Upsilon)$, $f \in \Dbox(\RR^4)$. It follows 
from this definition that the group acts also covariantly, 
\ie for any double cone $\Oa \subset \RR^4$ one has
$\alpha_\Upsilon(\Ac(\Oa)) = \Ac(\Upsilon \Oa)$, where 
$\Upsilon \, \Oa \doteq (\Lambda \lambda \Oa + y)$ in an obvious notation.
Finally, the Minkowskian vacuum state $\omega_0$
on $\Ac(\RR^4)$ is fixed by (linear extension from) the expectation values 
$$ 
\omega_0(W(f)) \doteq e^{-v(f,f)/2} \, , \quad f \in \Dbox(\RR^4) \, ,
$$
where
$v(f,f) \doteq (2 \pi)^{-3} \! 
\int d^4p \, \theta(p_0) \delta(p^2) \widehat{f}(-p) \widehat{f}(p)$
is the two-point function of the free field. The vacuum state is
distinguished by the fact that it is invariant under the action of
the extended Poincar\'e group, $\omega_0 \circ \alpha_\Upsilon = \omega_0$
for $\Upsilon \in \ProrW$, and that it is a ground state 
for the time translations in any Lorentz system. We will also 
consider thermal equilibrium (KMS) states on $\Ac(\RR^4)$.
By the Gelfand-Naimark-Segal (GNS) construction one can recover 
from any such state a concrete representation 
of the algebra on some Hilbert space. In case of the vacuum state 
$\omega_0$ one obtains the familiar Fock representation
of the free field as well as a continuous unitary representation of~$\ProrW$, 
which induces in this representation the automorphic action of this  
group and satisfies the relativistic spectrum condition. 
We will make use of these facts in subsequent sections. 

\section{Accelerated laboratories}
\setcounter{equation}{0}

We turn now to the description of the motion of a laboratory which initially 
occupies the spherical region $\biL_0 = \{x : x_0=0, \ |\bix - \bio| < r \}$ 
of the time $x_0 = 0$ plane in the chosen Lorentz system.
Its center $\bio$ is the position of the observer carrying along a clock. 
As outlined in the introduction, we consider motions  
under the influence of arbitrary (not necessarily constant) 
accelerations. We do this by taking any sufficiently regular timelike 
and future directed world line of the observer as input. 
Disregarding spatial rotations for simplicity, the motion 
of the laboratory is described by the Fermi-Walker transport 
along that curve. There arise then  
the questions up to which magnitude of 
acceleration this idealised model of a laboratory 
is meaningful and whether its motion can be described by the action of 
Poincar\'e transformations. 

The mathematical formulation of this issue proceeds as follows. 
Let $\gamma: t \mapsto \gamma(t)$ be
the world line of the observer, parametrised by his proper time
$t \geq 0$; it satisfies the initial conditions 
$\gamma(0) = (0,\bio)$, $\dot{\gamma}(0) = (1, \binull)$, where the 
dot indicates the derivative with respect to $t$.
The curve is assumed to be timelike, future-directed and 
twice continuously differentiable; $\dot{\gamma}$ denotes its tangent
vector field giving the normalised time direction 
and $a = \ddot{\gamma}$ is the acceleration. 
Now let $v$ be any differentiable vector field 
along $\gamma$. 
Then $v$ is Fermi-Walker transported along $\gamma$
(cf.\ for example \cite{Str}) if and only if 
\begin{align} \label{FW1}
\dot{v} = (\dot{\gamma} \cdot v) \, a - (a \cdot v) \, \dot{\gamma} \,.
\end{align}
If ${v}(0) \cdot \dot{\gamma}(0) = 0$
at the initial point $t=0$, then it follows from  \eqref{FW1}  
that ${v} \cdot \dot{\gamma} = 0$ all along~$\gamma$; for 
$ \frac{d}{dt}(v \cdot \dot{\gamma}) = 
\dot{v} \cdot \dot{\gamma} + v \cdot \ddot{\gamma} 
= -(a \cdot v) + (v \cdot a) = 0$ in view of the 
normalisation $\dot{\gamma}^2 =1 $.  
Similarly one can conclude the following from \eqref{FW1}: 
Two differentiable vector fields $v$ and $w$
that are Fermi-Walker transported along~$\gamma$ 
are mutually metric-orthonormal (\ie  
$v^2 = -1$, $w^2 = -1$, $v \cdot w = 0$) along all of~$\gamma$
if and only if they have that property at some single point of~$\gamma$. 
This implies that, if at the initial point $t=0$ a set of 
pairwise metric-orthonormal vectors $\kappa_j(0)$, $j = 1,2,3$,  
is chosen such that $(\dot{\gamma}{}(0),\kappa_1(0),\kappa_2(0), 
\kappa_{3}(0))$ forms a four-dimensional Lorentz frame
(tetrad) affixed at $\gamma(0)$, and 
if $\kappa_j(t)$ denotes the vectors
obtained from Fermi-Walker transporting the $\kappa_j(0)$ along $\gamma$ 
from $\gamma(0)$ to $\gamma(t)$,  $j = 1,2,3$, 
then $(\dot{\gamma}{}(t),\kappa_1(t),\kappa_2(t),\kappa_{3}(t))$
is again a four-dimensional Lorentz frame, affixed at~$\gamma(t)$. 
As a matter of fact, the frame obtained by Fermi-Walker 
transport along $\gamma$ is differentiable. Moreover, it is not
difficult to see that there is a continuous 
function of proper orthochronous 
Lorentz transformations $t \mapsto \Lambda(t) \in \ProrL$ 
inducing the Fermi-Walker transport along  $\gamma$, \ie it 
transforms the frame $(\dot{\gamma}{}(0),\kappa_1(0),\kappa_2(0),
\kappa_{3}(0))$ onto the frame $(\dot{\gamma}{}(t),\kappa_1(t),
\kappa_2(t), \kappa_{3}(t))$, $t \geq 0$.

Next, we consider the motion of the laboratory, which is initially 
at rest in the spherical region  
$\biL_0 = \{ x : x_0 = 0, \, |\bix - \bio| < r \} $. As explained,  
we want to describe its motion by the Fermi-Walker transport of $\biL_0$
along the world line $\gamma$. Thus, at proper time $t > 0$ of 
the observer, we are led to assign to the laboratory the region
\begin{equation}  \label{Lotra}
\biL_t = 
\{\gamma(t) + \sum_{j=1}^3 y_j \kappa_j(t) : |\biy| < r \} 
= \gamma(t) + \Lambda(t) (\biL_0 - \gamma(0)) \, ,
\end{equation}
where $\biy = (y_1,y_2,y_3)$ are the spatial coordinates 
relative to the position of the observer in his 
current Lorentz system. This corresponds to the conceived 
experimental situation, where the laboratory is being rigidly 
dragged along the worldline $\gamma$, keeping its 
spherical shape relative to the observer. 

If the laboratory undergoes an accelerated motion, \ie if  
$a = \ddot{\gamma} \neq 0$, 
the various points in the laboratory follow worldlines with 
acceleration other than $a$ and
so there are tidal forces acting on the laboratory. 
Apart from the restrictions on the
acceleration of $\gamma$ stemming from preserving rigidity of 
a realistic laboratory against the tidal forces, there
is an a priory restriction on the acceleration that derives from 
the requirement that any point in the laboratory must follow a 
timelike, future-directed worldline. In other words,
the laboratory is to be represented by a congruence of timelike, 
future-directed worldlines.
To see what this restriction amounts to, let us parametrise 
the worldlines of points of the laboratory according to 
$$
\gamma_{{\bisy}}(t) \doteq 
\gamma(t) + \sum_{j=1}^3 y_j \kappa_j(t) \, , \quad |\biy| < r \, .
$$
Since the $\kappa_j$ are Fermi-Walker transported along 
$\gamma = \gamma_{{\bisnull}}$, Eqn.\ \eqref{FW1} implies
\begin{equation} \label{a-constraint}
\dot{\gamma}_{\bisy}(t)  = 
\dot{\gamma}(t) + \sum_{j=1}^3 y_j \dot{\kappa}_j(t)
 = \big( 1 -  \sum_{j=1}^3 y_j \, a(t)  \cdot \kappa_j(t) 
\big) \, \dot{\gamma}(t) \, , 
\end{equation}
hence all the $\gamma_{\bisy}$ are timelike and future-directed if and 
only if $ \sum_{j=1}^3 y_j \, a(t) \! \cdot \! \kappa_j(t) < 1$
for all $|\biy| < r$ at all times $t \geq 0$.
In view of  $\dot{\gamma} \! \cdot \! a = 0$ and as 
the $\kappa_j(t)$, $j=1,2,3$, form an orthonormal basis of the
hyperplane that lies metric-orthogonal 
to $\dot{\gamma}(t)$, this implies that 
the acceleration must satisfy the condition   
$r \sqrt{-a^2} <1 \, \widehat{=} \, c^2$ all along 
any admissible worldline~$\gamma$.

It may be worth inserting some numbers into this condition
to see that realistic laboratories are quite far from that bound. 
According to references compiled in 
\cite{Wi}, among the highest accelerations known so far are 
$1.9 \cdot 10^9 \, m/s^2$ for protons in the LHC, 
$7 \cdot 10^{12}\,m/s^2$ at the surface of a neutron star, 
and $8.8 \cdot 10^{13}\, m/s^2$ for protons at the Fermilab accelerator. 
As $c^2 = 9 \cdot 10^{16} \, m^2/s^2$, a  perfectly 
rigid laboratory could still extend
about $r = 10^3\,m$ without violating the above bound 
even at these extreme accelerations. Material stresses
would of course deform or destroy a realistic laboratory at much 
lower acceleration scales. The present limit for instrumentation 
acceleration is at about $1.6 \cdot 10^5 \,m/s^2$. 

So the conclusion of this discussion is the insight that 
it is meaningful (i)  to consider spatially extended rigid laboratories 
undergoing acceleration with time scale fixed  by 
the clock of the observer and (ii) to describe the motion of the
laboratory by the action of Poincar\'e transformations, which are 
parametrised by the proper time of the observer, cf.\ 
Eqn.\ \eqref{Lotra}. 

\section{Macroscopic stability of the vacuum 
against acceleration}
\setcounter{equation}{0}  

Having clarified the description of moving laboratories, we will 
show now that the macroscopic properties of the vacuum state 
found by an observer do not change under the influence of 
arbitrary accelerations along his world line $\gamma$. 
The observables in the laboratory region $\biL_t$ 
at his proper time $t \geq 0$
are described by elements of the algebra  $\Ac(\Oa_t)$, where 
the double cone $\Oa_t$ is the causal completion of the  
spatial region~$\biL_t$. 
Note that the indexing of the algebras by double 
cones is merely a matter of notational convenience. It is 
justified by the fact that the underlying Weyl operators 
are defined on the quotient 
$\Dc(\Oa_t) / \square \Dc(\RR^4)$, which corresponds 
to the space of Cauchy data of the wave equation 
with support in~$\biL_t$. 

According to Eqn.\ \eqref{Lotra} one has $\biL_t = \Gamma(t) \, \biL_0$,
where $\Gamma(t) \doteq  
(\gamma(t) - \Lambda(t) \gamma(0), \Lambda(t), 1) \in 
\ProrW$. This entails the corresponding relations for the causal completions,
$\Oa_t =  \Gamma(t) \, \Oa_0$, $t \geq 0$.
It then follows from the covariant action
of the group $\ProrW$ on the algebras that
$\Ac(\Oa_t) = \Ac(\Gamma(t) \, \Oa_0) =
\alpha_{\Gamma(t)}(\Ac(\Oa_0))$, $t \geq 0$. Thus for all 
admissible world lines 
$\gamma$ satisfying the constraint on the acceleration 
given in the preceding section,
the motion of the observables in the laboratory can be described by 
the automorphic action of the group. Being interested in  
persistent macroscopic properties, the observer 
will be led to analyse the states with his observables $\alpha_{\Gamma(t)} (A)$ 
at asymptotic times $t \rightarrow \infty$. As a matter of fact, these   
observables form central sequences whose limits can thus be interpreted
as classical observables.
For the proof we need the following two lemmas,  
the first one being of geometric nature.

\begin{Lem}  \label{geometry}
Let $t \mapsto \gamma(t)$ be any admissible world line of an observer
and let  $\Oa_t$, $t \geq 0$, be the causal completion of 
his Fermi-Walker transported laboratory region of radius $r$, 
where $r \sqrt{-a^2} \leq v^2 < 1$ all along~$\gamma$. 
There exist  an open double cone $\, \Oa \subset \RR^4$ 
and a lightlike vector $l$
such that for any given bounded region 
$\Ba \subset (\Oa + \RR \, l ) \doteq \bigcup_{u \in \RR} \, (\Oa + u \, l )
\subset \RR^4 $
the region $\Oa_t$ is timelike separated 
from $\Ba$ for sufficiently large~$t > 0$.
\end{Lem}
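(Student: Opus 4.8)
The plan is to translate the geometric claim into a statement about the chronological past of the worldline $\gamma$, and then to exhibit the required tube as an open double cone sitting inside a suitable null half-space.

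\medskip\noindent\emph{Reduction to the past of the worldline.} Since $\Oa_t=\Gamma(t)\,\Oa_0$ is the image of the fixed radius-$r$ double cone $\Oa_0$ about $\gamma(0)$ under the Poincar\'e transformation $\Gamma(t)$, every point of $\Oa_t$ lies within Minkowskian reach $r$ of $\gamma(t)$ (its tips are $\gamma(t)\pm r\,\dot\gamma(t)$ and its edge is the radius-$r$ sphere in the instantaneous rest frame). Hence, to show that $\Oa_t$ is timelike separated from a point $p$ it is enough to show that $p$ lies in the open past $I^-(\gamma(t))$ with timelike distance $\sqrt{(\gamma(t)-p)^2}$ eventually exceeding $r$. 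I would first record the monotonicity $I^-(\gamma(t'))\subseteq I^-(\gamma(t))$ for $t'\le t$, which holds because $\gamma(t)-\gamma(t')=\int_{t'}^t\dot\gamma$ is future timelike; thus the relevant set is $S:=\bigcup_{t\ge0}I^-(\gamma(t))$, and a point, once in $S$, stays there.

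\medskip\noindent\emph{Growing margin.} For $p\in I^-(\gamma(t_0))$ I would write $\gamma(t)-p=(\gamma(t_0)-p)+(\gamma(t)-\gamma(t_0))$ as a sum of two future timelike vectors and combine the reverse triangle inequality with the twin-paradox bound $\sqrt{(\gamma(t)-\gamma(t_0))^2}\ge t-t_0$ to get $(\gamma(t)-p)^2\ge\big(\sqrt{(\gamma(t_0)-p)^2}+(t-t_0)\big)^2\to\infty$. On a compact set $\overline{\Ba}\subset S$ the increasing open cones cover $\overline{\Ba}$, so $\overline{\Ba}\subset I^-(\gamma(t_0))$ for a single $t_0$ and the bound is uniform; the margin then beats $r$ and $\Oa_t$ is timelike separated from $\Ba$ for large $t$. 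This reduces the lemma to producing an open double cone $\Oa$ and a lightlike $l$ with $\overline{\Oa}+\RR\,l\subset S$; since a null half-space $\{p\cdot l<c\}$ is invariant under translations by $l$, it suffices to show that $S$ contains such a half-space.

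\medskip\noindent\emph{Constructing the half-space.} Write $\gamma=(\gamma_0,\vec\gamma)$, $\dot\gamma=(\cosh\eta,\sinh\eta\,\vec n)$ with $\eta$ the rapidity and $\vec n$ a unit vector, and set $g(t):=\gamma_0(t)-|\vec\gamma(t)|$ and $h_p(t):=\gamma_0(t)-|\vec\gamma(t)-\vec p|$, so that $p\in S\iff\sup_t h_p(t)>p_0$ and $|h_p-g|\le|\vec p|$. If $\sup_t g=\infty$ then $h_p\to\infty$ for every $p$, hence $S=\RR^4$ and any tube works. Otherwise $g$ is bounded; its derivative is $\cosh\eta-\sinh\eta\,(\hat\gamma\cdot\vec n)\ge e^{-\eta}>0$, so $g\nearrow c:=\lim_t g(t)<\infty$ and $\int_0^\infty e^{-\eta}<\infty$. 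Using the identity $-a^2=\dot\eta^2+\sinh^2\eta\,|\dot{\vec n}|^2$ (valid for any proper-time parametrised timelike curve), the admissibility bound $\sqrt{-a^2}\le v^2/r$ gives $|\dot{\vec n}|\le (v^2/r)/\sinh\eta$; since $\int e^{-\eta}<\infty$ forces $\eta\to\infty$ (as $\eta$ is Lipschitz), this yields $\int_0^\infty|\dot{\vec n}|<\infty$, so $\vec n(t)\to\vec m$ and likewise $\vec\gamma(t)/|\vec\gamma(t)|\to\vec m$. Expanding $|\vec\gamma(t)-\vec p|=|\vec\gamma(t)|-\vec m\cdot\vec p+o(1)$ gives $h_p(t)\to c+\vec m\cdot\vec p$, whence $p\in S$ whenever $p\cdot l<c$ for $l:=(1,\vec m)$. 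Thus $S\supseteq\{p\cdot l<c\}$, and any open double cone $\Oa$ with $\sup_{\overline{\Oa}}p\cdot l<c$ furnishes a tube with $\overline{\Oa}+\RR\,l\subset\{p\cdot l<c\}\subset S$, every bounded subset of which has compact closure inside $S$.

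\medskip\noindent\emph{Main obstacle.} The computational core, and the only place where admissibility is essential, is the escaping case $\sup_t g<\infty$, where one must rule out a wandering asymptotic direction: boundedness of the monotone quantity $g$ forces $\eta\to\infty$ fast enough that $e^{-\eta}$ is integrable, while the uniform bound $\sqrt{-a^2}\le v^2/r$ caps the angular turning rate as $|\dot{\vec n}|\le (v^2/r)/\sinh\eta$, and these two facts together make $\vec n(t)$ converge so that $l$ and $c=\lim g$ are well defined. I expect verifying this convergence $\vec n(t)\to\vec m$ and the expansion of $h_p$ uniformly on compacta to be the most delicate step, whereas the reduction to $S$ and the growing-margin estimate are routine.
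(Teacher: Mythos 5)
There is a genuine gap, and it sits exactly where you declare the argument ``routine'': in the reduction and the growing-margin step, not in the half-space construction. You infer that $\Oa_t$ is timelike separated from $p$ as soon as $p\in I^-(\gamma(t))$ with timelike distance from the \emph{centre} exceeding $r$, on the grounds that every point of $\Oa_t$ is ``within Minkowskian reach $r$'' of $\gamma(t)$. This implication is false: Minkowski geometry has no ball of radius $r$ in the Euclidean sense. Concretely, take $\gamma$ inertial, $\gamma(t)=(t,\binull)$, $r=1$, $p=(0,R,0,0)$, and $t$ with $t^2=R^2+1+\epsilon$: then $\sqrt{(\gamma(t)-p)^2}>r$, yet the lower tip $\gamma(t)-r\dot\gamma(t)=(t-1,\binull)$ satisfies $(t-1)^2-R^2<0$ for large $R$, so part of $\Oa_t$ is spacelike to $p$. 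What is actually needed is that the lower tip $\eta(t)\doteq\gamma(t)-r\dot\gamma(t)$ becomes causally future to $p$ (since $\Oa_t\subset V_++\eta(t)$), i.e.\ one must compare $(\gamma(t)-p)^2$ against $2r\,(\gamma(t)-p)\cdot\dot\gamma(t)$; the latter is not controlled by your margin estimate and can be of the same order as $(\gamma(t)-p)^2$ itself. For the uniformly accelerated worldline with acceleration $\sfa$ and $p$ in its past one finds $(\eta(t)-p)^2\sim \mathrm{const}\cdot e^{\sfa t}(1/\sfa-r)$, positive precisely because admissibility forces $r\sfa<1$; at the admissibility bound the comparison is borderline. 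So the hypothesis $r\sqrt{-a^2}\le v^2<1$ is essential already in this step --- contrary to your closing remark that it enters only in the escaping case --- and a growth estimate along $\gamma$ alone cannot close the argument.

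The repair is the paper's opening move: work from the outset with the tip worldline $\eta=\gamma-r\dot\gamma$, note $\dot\eta=\dot\gamma-ra$ and $\dot\eta^2=1+r^2a^2\ge 1-v^2>0$ by admissibility (this is where, and only where, the bound is used quantitatively), so that $\eta$ is timelike, future-directed, with speed bounded below; your twin-paradox margin argument then applies verbatim to $\eta$, and timelike separation of $p$ from all of $\Oa_t$ follows once $\eta(t)-p$ is future causal. Your asymptotic analysis of the chronological past (rapidity $\to\infty$, $\int|\dot{\vec n}|\,dt<\infty$, convergent null direction $l=(1,\vec m)$, half-space $\{p\cdot l<c\}$) is a genuinely different and attractive alternative to the paper's dichotomy via the characteristic hyperplanes $H_s(\bis)$, and looks essentially sound for $\gamma$; but it describes $\bigcup_t I^-(\gamma(t))$, whereas the set you need is $\bigcup_t I^-(\eta(t))$, and transplanting the computation to $\eta$ is not free: your identity $-a^2=\dot\chi^2+\sinh^2\chi\,|\dot{\vec n}|^2$ applied to $\eta$ would require $\ddot\eta$, hence $\dot a$, which is not available for a worldline assumed only $C^2$. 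One would instead have to compare the two pasts directly, which again costs an estimate using $v^2<1$. The paper's hyperplane argument sidesteps all of this, needing only continuity of $\eta$.
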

\begin{proof}
The lower tip of the double cone $\Oa_t$ moves along the world line
$t \mapsto \eta(t) \doteq (\gamma(t) - r \dot{\gamma}(t))$, hence
$\Oa_t \subset (V_++ \eta(t))$, where $V_+$ denotes the
forward lightcone, $t \geq 0$. It therefore suffices to establish
the existence of $\Oa$ and $\RR \, l$, as described in the statement,
by replacing $\Oa_t$ with $\eta(t)$, $t \geq 0$. 
Shifting the origin in~$\RR^4$, one may also assume that $\eta(0) = 0$.  
Since $\dot{\eta} (t) = (\dot{\gamma}(t) - r a(t))$
and  $\dot{\eta} (t)^2 = (1 + r^2 a(t)^2) \geq (1 - v^2) > 0$
it follows that $\eta(t) \in V_+$ and
$\eta(t)^2 \geq  (1 - v^2) \, t^2$, $t \geq 0$. 
Depending on the characteristics of the world
line~$\eta$ one must distinguish two cases by means of the 
characteristic hyperplanes 
$H_s(\bis) = \{ x : x_0 - \bis \bix = s \}$
for arbitrary unit vectors $\bis \in S^2$ and times $s \geq 0$. 

The first type of world line $t \mapsto \eta(t)$ has the property 
that it crosses
all of these hyperplanes in the course of time. Let $t_s(\bis)$
be the time where $\eta(t_s(\bis)) \in H_s(\bis) $. Note that 
$t_s(\bis)$ is unique and that for $t > t_s(\bis)$ one has
$\eta(t) \in F_s(\bis) \doteq \{ x : x_0 - \bis \bix > s \}$,
the future of $H_s(\bis)$, since the worldline is timelike
and future directed and the hyperplanes do not contain any 
timelike directions. 
Moreover, for fixed $s \geq 0$, the map $\bis \mapsto t_s(\bis)$ is 
continuous since the world line $\eta$ is continuous. Hence there exists 
the supremum
$t_s \doteq \sup_{\biss \in S^2} \, t_s(\bis)$  since $S^2$ is compact.
Thus for $t > t_s$  one has 
$\eta(t) \in \bigcap_{\biss \in S^2} F_s(\bis) = (V_+ + (s, \binull))$.
Since $s \geq 0$ was arbitrary and since for any given bounded region
$\Ba \subset \RR^4$  one has $ \Ba \subset (-V_+ +(s, \binull))$
for sufficiently large $s \geq 0$ it follows that $\eta(t)$
and therefore also~$\Oa_t$ is timelike separated from $\Ba$ for $t > t_s$.

The second type of world line  $t \mapsto \eta(t)$  has the property 
that there is some hyperplane $ H_{s_0}(\bis_0)$ which it does not cross.
Hence $\eta(t) \in V_+ \bigcap P_{s_0}(\bis_0)$, $t \geq 0$,
where $P_{s_0}(\bis_0) \doteq \{ x : x_0 - \bis_0 \bix < s_0 \}$ is the 
past of $ H_{s_0}(\bis_0)$. One then obtains for the lightlike vector
$l \doteq (1,\bis_0)$ and $u \in \RR$ 
$$
(\eta(t) - u \, l)^2 = \eta(t)^2 - 2 u (\eta_0(t) - \bis_0 \bieta(t))
\geq (1-v^2)t^2 - 2|u| s_0 \, , \quad t \geq 0 \, .
$$
It follows from this estimate 
that for any double cone $\Oa \subset ( -V_+) $ and 
$u$ varying in any bounded interval $\II \subset \RR$, the regions 
$(\Oa + u \, l)$, $u \in \II$, are timelike separated 
from $\eta(t)$ for sufficiently large $t \geq 0$. 
This completes the proof of the statement.
\end{proof}

In the second lemma we show that the algebra
of observables associated with the lightlike cylinder 
$(\Oa + \RR \, l)$, defined in the preceding statement, is irreducibly
represented in the GNS representation 
$(\pi_0, \Hc_0, \Omega_0)$ induced by the inertial vacuum state $\omega_0$. 
Here $\Hc_0$ denotes the familiar Fock space, $\Omega_0 \in \Hc_0$
the Fock vacuum and $\pi_0$ the homomorphism mapping the elements
of the algebra $\Ac(\RR^4)$ to bounded operators in $\Bc(\Hc_0)$.
One then has the equality of expectation
values $\omega_0(A) = \langle \Omega_0, 
\pi_0(A) \Omega_0 \rangle$, $A \in \Ac(\RR^4)$.
We also recall that there is a continuous unitary representation
$U_0$ of the extended Poinar\'e group on~$\Hc_0$ satisfying
$U_0(\Upsilon) \pi_0(A) U_0(\Upsilon)^{-1} =
\pi_0(\alpha_\Upsilon(A))$, $A \in \Ac(\Oa)$
and $U_0(\Upsilon) \Omega_0 = \Omega_0$, $\Upsilon \in \ProrW$. Moreover,
the joint spectrum of the generators of the spacetime translations
$U_0 \upharpoonright \RR^4$ is contained in the closed 
forward lightcone. The following statement, whose proof is given for 
completeness,  is a well-known consequence of these properties.

\begin{Lem} \label{irreducible}
Let $\Oa \subset \RR^4$, $l \in  \RR^4$, be any 
double cone and lightlike translation. respectively, and let
$\Ac(\Oa + \RR \, l)$ be the algebra generated
by $\Ac(\Oa + u \, l)$, $u \in \RR$. This algebra
is irreducibly represented in the vacuum representation
$(\pi_0, \Hc_0, \Omega_0)$, \viz the commutant of 
$\pi_0(\Ac(\Oa + \RR \, l))$ in $\Bc(\Hc)$ consists
of multiples of the identity, 
$\pi_0(\Ac(\Oa + \RR \, l))^\prime = \CC \, 1$.
\end{Lem}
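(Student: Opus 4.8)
My plan is to reduce the statement to a positivity-of-energy argument along the single lightlike direction $l$ and to extract triviality of the commutant from the analyticity forced by the spectrum condition. Without loss of generality I take $l$ future directed (otherwise replace $l$ by $-l$, which leaves $\RR\,l$ unchanged) and write $V(u) \doteq U_0(u\,l)$, a strongly continuous one-parameter unitary group whose generator $P_l$ is positive by the spectrum condition and which fixes $\Omega_0$. By covariance, $U_0(ul)\pi_0(\Ac(\Oa))U_0(ul)^{-1} = \pi_0(\Ac(\Oa + u l))$, so the von Neumann algebra $\Nc \doteq \pi_0(\Ac(\Oa + \RR l))''$ is generated by the translates $V(u)\Nc_0 V(-u)$, $u\in\RR$, where $\Nc_0 \doteq \pi_0(\Ac(\Oa))''$. (Geometrically the reason to expect irreducibility is that the tube $\Oa+\RR l$ has empty causal complement, but I will not use this directly.) It therefore suffices to show $\Nc' = \CC\,1$. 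So I fix $T\in\Nc'$; since $T$ commutes with every $V(u)\Nc_0 V(-u)$, the operator $T_u \doteq V(-u) T V(u)$ lies in $\Nc_0'$ for every $u\in\RR$.

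The heart of the argument is a two-sided analyticity (edge-of-the-wedge) construction. For $A, B \in \Nc_0$ I set $f(u)\doteq\langle B\Omega_0,\, T_u A\Omega_0\rangle$, $u\in\RR$. Commuting $T_u\in\Nc_0'$ either to the left past $B^*$ or to the right past $A$, and using $V(u)\Omega_0=\Omega_0$, I rewrite $f$ in two ways,
\[
f(u) \;=\; \langle T^*\Omega_0,\, V(u)\, B^* A\,\Omega_0\rangle \;=\; \langle A^* B\Omega_0,\, V(-u)\, T\Omega_0\rangle\,.
\]
Because $P_l\geq 0$, the first expression extends to a bounded holomorphic function on the upper half plane (there $V(u)=e^{iuP_l}$ is a contraction) and the second to a bounded holomorphic function on the lower half plane; as the two agree on $\RR$, a standard gluing argument (Morera's theorem) shows that $f$ is the restriction of a bounded entire function, hence constant by Liouville's theorem.

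It remains to evaluate the constant and to conclude. Taking $u=i\sigma$ with $\sigma\to+\infty$ in the first expression, $e^{-\sigma P_l}$ converges strongly to the spectral projection $E_0$ onto $\ker P_l$. Here the crux is the identification $\ker P_l=\CC\,\Omega_0$, which is where the massless character of the field enters: in the Fock representation $P_l$ acts on the single-particle space as multiplication by $l\cdot k\ge 0$, and the collinear set $\{k : l\cdot k = 0\}$ on the mass shell has measure zero, so $0$ is a simple eigenvalue with eigenvector $\Omega_0$, i.e. $E_0=\langle\Omega_0,\,\cdot\,\rangle\,\Omega_0$. Hence $f\equiv c\,\langle B\Omega_0,\,A\Omega_0\rangle$ with $c\doteq\langle\Omega_0,T\Omega_0\rangle$. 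Comparing with $f(0)=\langle B\Omega_0,\,TA\Omega_0\rangle$ gives $\langle B\Omega_0,\,TA\Omega_0\rangle=c\,\langle B\Omega_0,\,A\Omega_0\rangle$ for all $A,B\in\Nc_0$; since $\Omega_0$ is cyclic for $\Nc_0$ by the Reeh--Schlieder theorem, the vectors $B\Omega_0$ and $A\Omega_0$ are dense, so $TA\Omega_0=c\,A\Omega_0$ on a dense domain and therefore $T=c\,1$. This establishes $\Nc'=\CC\,1$. The main obstacle is precisely the two-sided analyticity step together with the kernel identification $\ker P_l=\CC\,\Omega_0$; the algebraic manipulations and the appeal to Reeh--Schlieder are routine.
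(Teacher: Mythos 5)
Your proof is correct and follows essentially the same route as the paper's: both exploit covariance under the lightlike translations $U_0(ul)$, the positivity of their generator to obtain two-sided analyticity of matrix elements of the commutant (which the paper compresses into the phrase ``by Fourier analysis''), the uniqueness of the $U_0(ul)$-invariant vector, and the Reeh--Schlieder property to conclude $T=c\,1$. The only differences are presentational: you make the edge-of-the-wedge/Liouville step and the evaluation of the constant via $e^{-\sigma P_l}\to E_0$ explicit, and you verify $\ker P_l=\CC\,\Omega_0$ directly in the massless Fock space, where the paper simply cites this as a known consequence of Poincar\'e representation theory in four dimensions.
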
  
\begin{proof}
Let $U_0(u l)$, $u \in \RR$, be the unitaries implementing
the subgroup of lightlike translations $\RR \, l$. Due to covariance, 
$U_0(u l) \pi_0(\Ac(\Oa + \RR \, l)) U_0(u l)^{-1}
=  \pi_0(\alpha_{u l}(\Ac(\Oa + \RR \, l))) 
= \pi_0(\Ac(\Oa + \RR \, l))$,  $u \in \RR$,
and, by the Reeh-Schlieder property of the vacuum \cite[Sec.\ II.5.3]{Haag}, 
$\pi_0(\Ac(\Oa + \RR \, l)) \, \Omega_0 \subset \Hc_0$ 
is a dense subspace of~$\Hc_0$.
Now let $Z \in \pi_0(\Ac(\Oa + \RR \, l))^\prime$, then
for any $A \in \Ac(\Oa + \RR \, l)$ and $u \in \RR$
$$
\langle \Omega_0, Z U_0(u l) \pi_0(A) \Omega_0 \rangle
= \langle \Omega_0, Z \pi_0(\alpha_{ul}(A)) \Omega_0 \rangle
= \langle \Omega_0, \pi_0(\alpha_{ul}(A)) Z \Omega_0 \rangle
= \langle \Omega_0, \pi_0(A) U_0(u l)^{-1} Z \Omega_0 \rangle \, ,
$$
as a consequence of covariance and the invariance of $\Omega_0$
under the action of $U_0(u l)$, $u \in \RR$. The 
unitary group of lightlike translations 
$u \mapsto U_0(u l)$, $u \in \RR$, has a positive 
generator and the ray of $\Omega_0$ is the unique invariant subspace
of $\Hc_0$ under its action. (The latter fact is a 
well-known consequence of the 
representation theory of the Poincar\'e group in 
four spacetime dimensions.) It therefore follows from the preceding
equality by Fourier analysis with regard to $u \in \RR$ that
the  matrix elements do not dependent on $u \in \RR$, hence 
$\langle \Omega_0, \pi_0(A) Z \Omega_0 \rangle 
= \langle \Omega_0, \pi_0(A) \Omega_0 \rangle 
\langle \Omega_0, Z \Omega_0 \rangle $, $A \in \Ac(\Oa + \RR \, l)$. 
The Reeh-Schlieder property of $\Omega_0$ then implies
$Z = \langle \Omega_0, Z \Omega_0 \rangle 1$, completing the proof.  
\end{proof}

We are now in the position to prove the main result of this section.

\begin{Prop}
Let $t \mapsto \gamma(t)$ be any admissible world line as in 
Lemma \ref{geometry} and let \
\mbox{$t \mapsto \Gamma(t) \in \ProrW$} be the 
family of Poincar\'e transformations inducing the Fermi-Walker transport
of the laboratory region, \mbox{$\Oa_t = \Gamma(t) \, \Oa_0$, $t \geq 0$}. 
There exist in the vacuum representation 
$(\pi_0, \Hc_0, \Omega_0)$ of $\Ac(\RR^4)$ the limits
$$
\lim_{t \rightarrow \infty} \, \pi_0(\alpha_{\Gamma(t)}(A)) 
= \omega_0(A) \, 1 \, , \quad A \in \Ac(\Oa_0 ) \, ,
$$
in the weak operator topology.
\end{Prop}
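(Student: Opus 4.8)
The plan is to combine the two preceding lemmas with a soft compactness argument, reducing the claim to a statement about cluster points. Since each $\alpha_{\Gamma(t)}$ is a $*$-automorphism of $\Ac(\RR^4)$ and $\pi_0$ a $*$-representation, one has $\|\pi_0(\alpha_{\Gamma(t)}(A))\| = \|A\|$ for every $t$, so the net $t \mapsto \pi_0(\alpha_{\Gamma(t)}(A))$ stays in the ball of radius $\|A\|$ in $\Bc(\Hc_0)$. This ball is compact in the weak operator topology: since $\Bc(\Hc_0)$ is the dual of the trace-class operators, Banach--Alaoglu makes it weak-$*$ compact, and on bounded sets the weak-$*$ and weak operator topologies agree. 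Hence the net has cluster points as $t \to \infty$, and it suffices to show that every cluster point equals $\omega_0(A)\,1$, because a net in a compact space with a unique cluster point converges to it.

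So let $X$ be such a cluster point and choose a subnet with $\pi_0(\alpha_{\Gamma(t_\iota)}(A)) \to X$ in the weak operator topology and $t_\iota \to \infty$. Fix any $C$ in the algebra $\Ac(\Oa + \RR\,l)$ supplied by Lemma \ref{irreducible}; by construction $C$ is built from finitely many generators $\Ac(\Oa + u_i\,l)$ and so is localised in some bounded region $\Ba \subset (\Oa + \RR\,l)$. Lemma \ref{geometry} then yields a $T$ with $\Oa_t$ timelike separated from $\Ba$ for all $t \geq T$, and for these $t$ Huygens' principle forces $\alpha_{\Gamma(t)}(A) \in \Ac(\Oa_t)$ to commute with $C$, i.e. $[\pi_0(\alpha_{\Gamma(t)}(A)),\pi_0(C)] = 0$. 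Passing to the limit along the subnet (legitimate because left and right multiplication by the fixed bounded operator $\pi_0(C)$ are weak-operator continuous on bounded sets) gives $[X,\pi_0(C)] = 0$. As $C$ was arbitrary, $X$ lies in $\pi_0(\Ac(\Oa + \RR\,l))'$, which by Lemma \ref{irreducible} equals $\CC\,1$; thus $X = c\,1$ for a scalar $c$.

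To pin down $c$ I take the vacuum expectation: $c = \langle \Omega_0, X\,\Omega_0 \rangle = \lim_\iota \langle \Omega_0, \pi_0(\alpha_{\Gamma(t_\iota)}(A))\,\Omega_0 \rangle = \lim_\iota \omega_0(\alpha_{\Gamma(t_\iota)}(A))$. Since $\Gamma(t) \in \ProrW$ and the vacuum is invariant under the extended Poincar\'e group, $\omega_0 \circ \alpha_{\Gamma(t)} = \omega_0$, so every term already equals $\omega_0(A)$ and hence $c = \omega_0(A)$. Every cluster point therefore equals $\omega_0(A)\,1$, and by the compactness remark the full net converges to $\omega_0(A)\,1$ in the weak operator topology.

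The step I expect to need the most care is the compactness/unique-cluster-point mechanism, rather than any computation: one must check that the norm ball really is weak-operator compact (through its identification with the weak-$*$ topology on bounded sets) and that the principle "a unique cluster point in a compact space forces convergence" is being applied to the genuine net indexed by $t \to \infty$. The causal input is clean; the only point to verify there is that a generic element of $\Ac(\Oa + \RR\,l)$ is localised in a \emph{bounded} subregion, so that Lemma \ref{geometry} applies and the commutator vanishes for \emph{all} large $t$ (and thus, in particular, along the subnet).
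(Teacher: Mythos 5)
Your proposal is correct and follows essentially the same route as the paper's own proof: uniform boundedness yields weak-operator cluster points, Lemma \ref{geometry} plus Huygens' principle shows each cluster point commutes with $\pi_0(\Ac(\Oa + \RR\,l))$, Lemma \ref{irreducible} forces it to be a scalar, and Poincar\'e invariance of $\omega_0$ fixes the scalar as $\omega_0(A)$, so all cluster points coincide and the net converges. Your extra care about weak-operator compactness of the norm ball and about elements of $\Ac(\Oa + \RR\,l)$ being localised in bounded regions only makes explicit what the paper leaves implicit.
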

\begin{proof}
Given any $A \in \Ac(\Oa_0)$ 
one has $\alpha_{\Gamma(t)}(A) \in \Ac(\Gamma(t) \Oa) =  \Ac(\Oa_t)$, $t \geq 0$.
Now according to Lemma~\ref{geometry} there exist a double cone $\Oa$
and a lightlike vector $l \in \RR^4$ such that for any
finite interval \mbox{$\II \subset \RR$} the regions
$\Oa_t$ and $(\Oa + \II \, l)$  are timelike separated for
sufficiently large $t \geq 0$. It therefore follows from Huygens'
principle that the operators $\alpha_{\Gamma(t)}(A)$ commute with
any given operator \mbox{$B \in \Ac(\Oa + \RR \, l)$} in the limit
$t \rightarrow \infty$. 

Proceeding to the vacuum representation 
one makes use of the fact that 
for any given $A \in \Ac(\Oa_0)$ the family of operators 
$ \pi_0(\alpha_{\Gamma(t)}(A)) $, $t \geq 0$, is uniformly bounded and
thus has weak limit points in~$\Bc(\Hc_0)$. Let $t_n \geq 0$, $n \in \NN$,
be any sequence such that there exists the limit 
$Z \doteq \lim_{n \rightarrow \infty} \pi_0(\alpha_{\Gamma(t_n)}(A))$
in the weak operator topology. As a consequence of Huygens' principle one 
has $[Z, \pi_0(B)] = 0$ for any $B \in  \Ac(\Oa + \RR \, l)$ and
since $\pi_0( \Ac(\Oa + \RR \, l))$ is irreducible, cf.\   
Lemma~\ref{irreducible}, it follows that $Z = z 1$ for some $z \in \CC$. 
Now 
\begin{equation*}
\begin{split}
z  = \langle \Omega_0, Z \Omega_0 \rangle  =
\lim_{t \rightarrow \infty}   \langle \Omega_0, \pi_0(\alpha_{\Gamma(t_n)}(A)) 
\Omega_0 \rangle =
\lim_{t \rightarrow \infty}  \omega_0(\alpha_{\Gamma(t_n)}(A))
= \omega_0(A) \, ,
\end{split}
\end{equation*}
where the last equality obtains from the invariance of the vacuum under 
the action of the Poincar\'e group. Thus all weak limit points of
$ \pi_0(\alpha_{\Gamma(t)}(A)) $ \ for $t \rightarrow \infty$ coincide 
with $\omega_0(A) \, 1$, so this family of operators is convergent 
in the weak operator topology for $A \in \Ac(\Oa_0)$, as stated.  
\end{proof}

This result shows that quantum effects, leading  
to excitations of the vacuum described by vectors in $\Hc_0$, are 
suppressed in the limit of large times: the observables 
$\alpha_{\Gamma(t)}(A)$, $t \geq 0$, form central sequences 
in the vacuum representation which 
converge weakly and have sharp (non-fluctuating) limits 
due to the asymptotic abelianess of the dynamics. 
As a matter of fact, by taking suitable time averages 
one can also accomplish their convergence in the strong operator 
topology. The  
limits of these sequences give information about 
persistent macroscopic properties of the underlying states. 
In view of the fact that they
do not depend 
on the world line $\gamma$, we conclude that an 
accelerated observer will not
register any macroscopic effects of the acceleration in the vacuum state. 
In particular, he does not 
experience the vacuum state as a thermal gas, respectively bath,
contrary to statements made in the literature, cf.\ \cite[p.\ 167]{FuRu},
\cite[p.\ 3721]{Un1}, \cite[p.\ 115]{Wald}.
For such a gas would leave a trace in the asymptotic  
expectation values of some observable $A$ which differs 
from its vacuum expectation value~$\omega_0(A)$. 
 
We conclude this section by indicating how these central 
sequences of observables may be used in order to determine the temperature of
states, thereby complementing the discussion in \cite{BuSo}.
Within the present model, the inertial equilibrium states 
in the chosen Lorentz system are described 
by functionals $\omega_T$, $T > 0$, on 
$\Ac(\RR^4)$. They are fixed by (linear extension from) 
the expectation values
$$ 
\omega_T(W(f)) \doteq e^{-v_T(f,f)/2} \, , \quad f \in \Dbox(\RR^4) \, ,
$$
where
$v_T(f,f) \doteq (2 \pi)^{-3} \! 
\int d^4p \, \varepsilon(p_0) \delta(p^2) (1 - e^{-p_0/T})^{-1} 
\widehat{f}(-p) \widehat{f}(p)$
is the thermal two-point function of the free field. 
Proceeding to the corresponding GNS-representations 
$(\pi_T, \Hc_T, \Omega_T)$ one can show that for 
inertial motion $t \mapsto \gamma(t) \doteq (t, \binull)$ 
one has 
$\lim_{t \rightarrow \infty} \pi_T(\alpha_{\gamma(t)}(A)) = \omega_T(A) \, 1$,
$A \in \RR^4$, in the weak operator topology. 
(This is a well known consequence of the fact that the KMS states 
$\omega_T$ are faithful and have the mixing property, 
\ie they describe pure phases.)
For fixed~$A$ the mapping $T \mapsto \omega_T(A)$
describes the equation of state of the intensive 
quantity measured by $A$ as a function of the 
inertial temperature. Thus the observer may 
calibrate his observables in the inertial equilibrium states
and use them as empirical thermometers also when he is in 
motion. According to the preceding results he would then come
to the conclusion that in the presence of acceleration
the temperature of the vacuum remains to be~zero,
in accordance with the results in \cite{BuSo}.

\section{Macroscopic properties of accelerated 
equilibrium states}
\setcounter{equation}{0}

If an observer undergoes some constant 
acceleration of modulus $\sfa$ he will be able to
prepare in his laboratory equilibrium states and study their macroscopic
properties. In this section we determine the predictions which     
the present model makes about his findings. It will turn 
out that, in contrast to the inertial situation, the 
global parameter $T_\sfa$ characterising the accelerated equilibrium
states cannot be interpreted as temperature anymore.
In fact, disregarding the inertial vacuum,
the temperature varies spatially within these states.

In order to keep the notation simple, we 
assume that the observer moves with constant acceleration $\sfa > 0$
along the world line
$t \mapsto \gamma_\sfa(t) = (sh(t\sfa)/\sfa,ch(t\sfa)/\sfa,0,0)$, $t \geq 0$.
As is well-known and can also be inferred from the discussion in Sec.~2, 
the maximally possible
initial laboratory region is, in this special case, the half space 
$\biL_0 = \{x : x_0=0, \, x_1 > 0 \}$. At later times this region 
is Fermi-Walker transported to 
$\biL_t = \{ x : x_0 = th(t\sfa) \, x_1 > 0 \} = \Lambda_\sfa(t) \, \biL_0$, 
where $\Lambda_\sfa(t) \in \ProrL$ are the boosts 
fixed by the given worldline, $t \geq 0$. The causal completion of each  
of these laboratory regions is the wedge shaped region \
$\Wa \doteq \{ x : x_1 > |x_0| \} \subset \Mc$ which is 
stable under the action of the boosts. 
Let us mention as an aside that one should not 
think of this wedge region as Rindler space, having an insurmountable 
boundary at its edge (the apparent horizon). For that 
idea might induce one to take the Rindler vacuum as 
a (global) reference state, which is disjoint from the 
inertial vacuum and leads  
to an interpretation of the theory in terms of ``Rindler quanta''.  
Such an interpretation would not correspond to the situation
treated here, where a Minkowski space based observer 
enters a spacecraft in order to perform experiments
with his local observables. 

The equilibrium states which, in principle, can be prepared by 
the accelerated observer in his laboratory
are stationary and passive \cite{PuWo} and hence are also 
described by KMS states $\omega_{T_\sfa}$, $T_\sfa > 0$, 
on the algebra $\Ac(\Wc)$ with regard to the automorphic action
$\alpha_{\Lambda_\sfa(t)}$, $t \in \RR$, of the  
time translations. Note that the index $\sfa$ 
will be used throughout in order to distinguish quantities related to the
latter dynamics. 
The KMS states are  fixed by (linear extension from) 
the expectation values
\begin{equation} \label{kms1}
\omega_{T_\sfa}(W(f)) \doteq e^{-v_{T_\sfa}(f,f)/2} \, , \quad f \in \Dbox(\Wc) \, ,
\end{equation}
where the thermal two-point function $v_{T_\sfa}$ is now given by 
\begin{equation} \label{kms2}
v_{T_\sfa}(f,f) \doteq (2 \pi)^{-4} \! 
\int \! du \, (1 - e^{-u/{T_\sfa}})^{-1} 
\int \! dv \, e^{-iuv} \!   
\int d^4p \, \varepsilon(p_0) \delta(p^2) \, 
\widehat{f}(-p) \widehat{f}(\Lambda_\sfa(v) \, p) \, . 
\end{equation}
Since we are not aware of a reference where it has been shown 
that the functionals $\omega_{T_\sfa}$, $T_\sfa > 0$,  
are KMS states (satisfying the condition of positivity)
in case of the free massless scalar field 
we provide a proof in an appendix. 
The well-known fact that for 
the special value $T_\sfa = \sfa/2 \pi$ the corresponding 
state~$\omega_{T_\sfa}$
coincides with the restriction of the inertial vacuum to the
wedge algebra, $\omega_{T_\sfa} = \omega_0 \upharpoonright \Ac(\Wc)$, 
is commonly interpreted as formal evidence for the Unruh effect
\cite{Fulling, Sewell, UnWa}. 

In analogy to the results established in 
the preceding section one can show that  
observables that are localised in relatively compact regions 
of $\Wc$ form central sequences at asymptotic times, 
which have sharp limit values in the GNS-representations 
induced by the 
KMS states $\omega_{T_a}$ on  $\Ac(\Wc)$, $T_a > 0$. 
The proof is based on standard arguments and 
given here for completeness. 

\begin{Prop} \label{centrallimits} 
Let $\, \omega_{T_\sfa}$, $T_\sfa > 0$, be any 
KMS state, defined above relative to the 
automorphic action of the dynamics $\alpha_{\Lambda_\sfa(t)}$, $t \in \RR$,
on the algebra $\Ac(\Wc)$, and let
$(\pi_{T_\sfa}, \Hc_{T_\sfa}, \Omega_{T_\sfa})$ be the 
corresponding GNS-representation. There exist the limits 
$$
\lim_{t \rightarrow \infty} \, \pi_{T_\sfa}(\alpha_{\Lambda_\sfa(t)}(A))
= \omega_{T_\sfa}(A) \, 1 \, , \quad A \in \Ac(\Wc) \, ,
$$
in the weak operator topology. 
\end{Prop}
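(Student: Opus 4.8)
The plan is to follow the template of the Proposition of Section~3, replacing its irreducibility input (Lemma~\ref{irreducible}) by the \emph{factoriality} of the KMS representation. First I would reduce the assertion to observables $A$ lying in the $*$--algebra generated by the $W(f)$, $f \in \Dbox(\Wc)$; every such $A$ is localised in a single relatively compact region of $\Wc$ (a finite union of compact supports inside the open wedge). Since $\alpha_{\Lambda_\sfa(t)}$ is isometric and $\pi_{T_\sfa}$ is contractive, the family $\pi_{T_\sfa}(\alpha_{\Lambda_\sfa(t)}(A))$ is uniformly bounded, and the general case $A \in \Ac(\Wc)$ then follows from the compactly localised one by a routine density-and-boundedness ($3\varepsilon$) argument on matrix elements.

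The geometric core, playing the role of Lemma~\ref{geometry}, is the observation that under the boosts $\Lambda_\sfa(t)$ a fixed relatively compact region $\Oa_1 \subset \Wc$ is transported along hyperbolae toward the upper lightlike boundary of the wedge and out to infinity. Consequently $\Lambda_\sfa(t)\,\Oa_1$ comes to lie in the timelike future of any fixed relatively compact $\Oa_2 \subset \Wc$ once $t$ is large: for $p$ in the boosted region and $q \in \Oa_2$ the dominant, exponentially growing null component of $p$ is paired through the Minkowski form with the wedge-positive combination $q_1 - q_0 > 0$, so that $(p-q)^2 \to +\infty$ with $p_0-q_0>0$, uniformly on compacta. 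By Huygens' principle for the massless field this yields asymptotic abelianness: $[\alpha_{\Lambda_\sfa(t)}(A), B] = 0$ for all sufficiently large $t$ whenever $A,B$ are localised in relatively compact subregions of $\Wc$.

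With this in hand the argument runs as in Section~3. Uniform boundedness furnishes weak limit points; let $Z$ be the weak limit of $\pi_{T_\sfa}(\alpha_{\Lambda_\sfa(t_n)}(A))$ along some $t_n \to \infty$. For each fixed compactly localised $B$ the commutator $[\pi_{T_\sfa}(\alpha_{\Lambda_\sfa(t)}(A)), \pi_{T_\sfa}(B)]$ vanishes identically for large $t$, hence $[Z, \pi_{T_\sfa}(B)] = 0$; as these $\pi_{T_\sfa}(B)$ generate $\pi_{T_\sfa}(\Ac(\Wc))''$, passing to commutants gives $Z \in \pi_{T_\sfa}(\Ac(\Wc))'$. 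Since $Z$ is also a weak limit of elements of $\pi_{T_\sfa}(\Ac(\Wc))''$, it lies in the \emph{center} of this von Neumann algebra. Here the departure from Section~3 enters: the KMS representation is reducible, so Lemma~\ref{irreducible} is unavailable. Instead I would invoke that $\omega_{T_\sfa}$ is an extremal (primary) KMS state, whence $\pi_{T_\sfa}(\Ac(\Wc))''$ is a factor, its center is trivial, and $Z = z\,1$ for some $z \in \CC$. The constant is identified in the cyclic vector, $z = \langle \Omega_{T_\sfa}, Z\,\Omega_{T_\sfa}\rangle = \lim_n \omega_{T_\sfa}(\alpha_{\Lambda_\sfa(t_n)}(A)) = \omega_{T_\sfa}(A)$, the last step being the invariance of a KMS state under its own dynamics. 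As every weak limit point equals $\omega_{T_\sfa}(A)\,1$, the net converges weakly to this value.

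The main obstacle is precisely the factoriality/extremality of $\omega_{T_\sfa}$, which here substitutes for the irreducibility lemma. Because the state is quasi-free, I expect to reduce it to a clustering property of the two-point function: one shows $v_{T_\sfa}(f, g_{\Lambda_\sfa(s)}) \to 0$ as $s \to \infty$. From the explicit representation \eqref{kms2} the boost acts as a shift of the rapidity variable, producing an oscillatory factor $e^{\pm i u s}$ in the boost parameter, and the Riemann--Lebesgue lemma applied to the $u$--integral (the boost generator having absolutely continuous spectrum $u \in \RR$) yields the decay. Clustering of the two-point function gives the mixing property of the quasi-free KMS state, hence its extremality and the required factor property --- the accelerated analogue of the mixing/pure-phase property invoked for the inertial equilibrium states at the end of Section~3.
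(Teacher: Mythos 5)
Your proposal is correct, but it takes a genuinely different---and in the end partly redundant---route compared with the paper. The paper does not pass through asymptotic abelianness or factoriality at all: it proves directly that $\omega_{T_\sfa}$ is mixing, by reducing (via the Weyl relations) to the decay $v_{T_\sfa}(g, f_{\Lambda_\sfa(t)^{-1}}) \to 0$, established through the group law $\Lambda_\sfa(v)\Lambda_\sfa(t)=\Lambda_\sfa(v+t)$ and the Riemann--Lebesgue lemma. Mixing gives weak convergence of $\pi_{T_\sfa}(\alpha_{\Lambda_\sfa(t)}(A))\,\Omega_{T_\sfa}$ to $\omega_{T_\sfa}(A)\,\Omega_{T_\sfa}$, and then a single structural input finishes the proof: the GNS vector of a KMS state is \emph{separating} for the weak closure, so every weak limit point $Z$ of the uniformly bounded family, which lies in $\pi_{T_\sfa}(\Ac(\Wc))''$ and satisfies $Z\Omega_{T_\sfa}=\omega_{T_\sfa}(A)\Omega_{T_\sfa}$, must equal $\omega_{T_\sfa}(A)\,1$. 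Your scheme instead transplants the Section~3 template: the geometric observation that boosted compact subregions of $\Wc$ become timelike separated from any fixed compact subregion (your estimate via the null coordinates $e^{\pm t\sfa}$ is correct), Huygens' principle, centrality of limit points, and factoriality in place of irreducibility. All of these steps are sound, including the identification $z=\omega_{T_\sfa}(A)$ by invariance of the KMS state under its own dynamics. But notice the redundancy you yourself flag as the ``main obstacle'': to get factoriality you prove precisely the two-point clustering that the paper proves---and once that clustering (hence mixing) is in hand, the separating property alone yields the Proposition, so your entire asymptotic-abelianness and centrality superstructure can be discarded. What your detour buys is conceptual uniformity with Section~3; what the paper's route buys is brevity and independence from extremality arguments.

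One technical point deserves sharpening. In your clustering step you invoke the Riemann--Lebesgue lemma citing absolute continuity of the boost spectrum, but that alone is not sufficient: the integrand in \eqref{kms2} carries the factor $(1-e^{-u/T_\sfa})^{-1}$, singular at $u=0$, so one must know that the spectral density vanishes there (making the full integrand $L^1$). The paper's appendix establishes exactly this: the function $u \mapsto \int dv\, e^{-iuv} \int d^4p\,\varepsilon(p_0)\delta(p^2)\,\widehat{g}(-p)\widehat{f}(\Lambda_\sfa(v)p)$ is a test function vanishing at $u=0$, which follows from the fact that $t \mapsto \kappa(f, g_{\Lambda_\sfa(t)^{-1}})$ has \emph{compact support}---a consequence of the very timelike-separation-plus-Huygens geometry you develop at the start of your proposal. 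So the missing ingredient is already in your hands; you need only route it into the clustering estimate rather than into asymptotic abelianness of the observables.
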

\begin{proof}
The crucial step in the argument is the proof that the state
$\omega_{T_\sfa}$ is mixing, \ie for any pair of operators 
$A,B \in \Ac(\Wc)$ one has \ 
$\lim_{t \rightarrow \infty} \, \omega_{T_\sfa}(B \, \alpha_{\Lambda_\sfa(t)}(A))
=  \omega_{T_\sfa}(B) \, \omega_{T_\sfa}(A)$. 
This property implies that 
$\lim_{\, t \rightarrow \infty} \pi_{T_\sfa}( \alpha_{\Lambda_\sfa(t)}(A)) \, 
\Omega_{T_\sfa} = \omega_{T_\sfa}(A) \, \Omega_{T_\sfa}$, $A \in \Ac(\Wc)$, 
in the sense of weak convergence in $\Hc_{T_\sfa}$.
Since the family of operators $\pi_{T_\sfa}( \alpha_{\Lambda_\sfa(t)}(A))$, 
$t \geq 0$, is uniformly bounded, hence has weak limit points, and 
KMS states are separating
for the weak closure of the represented algebra, the statement 
then follows. For the proof of the mixing property one 
makes use of the fact that the elements of $\Ac(\Wc)$ are linear 
combinations of Weyl operators, so it suffices to
consider the functions $t \mapsto 
\omega_{T_\sfa} (W(g) \alpha_{\Lambda_\sfa (t)}(W(f)))$, $f,g \in \Dbox(\Wc)$.
Applying the Weyl relations one gets   
$$
\omega_{T_\sfa} (W(g) \alpha_{\Lambda_\sfa(t)}(W(f))) =
e^{- \kappa(g,f_{\Lambda_\sfa (t)^{-1}})/2} \,
\omega_{T_\sfa} (W(g + f_{\Lambda_\sfa(t)^{-1}}))
= e^{- v_{T_\sfa}(g,f_{\Lambda_\sfa (t)^{-1}})} \,
e^{-v_{T_\sfa}(g,g)/2} \,
e^{-v_{T_\sfa}(f,f)/2} \, .
$$
Furthermore, making use of the group law 
$\Lambda_\sfa(v) \Lambda_\sfa(t) = \Lambda_\sfa(v+t)$, one obtains 
$$
v_{T_\sfa}(g,f_{\Lambda_\sfa (t)^{-1}}) = (2 \pi)^{-4} \! 
\int \! du \, e^{itu} (1 - e^{-u/{T_\sfa}})^{-1} 
\int \! dv \, e^{-iuv} \!   
\int d^4p \, \varepsilon(p_0) \delta(p^2) \, 
\widehat{g}(-p) \widehat{f}(\Lambda_\sfa(v) \, p) \, . 
$$
It is shown in the appendix that \
$u \mapsto \int \! dv \, e^{-iuv} \!   
\int d^4p \, \varepsilon(p_0) \delta(p^2) \, 
\widehat{g}(-p) \widehat{f}(\Lambda_\sfa(v) \, p)$ is
a test function that vanishes at $u = 0$ for any choice of 
$f,g \in \Dbox(\Wc)$. The Riemann-Lebesgue lemma then  
implies \ $\lim_{\, t \rightarrow \infty}  v_{T_\sfa}(g,f_{\Lambda^{-1}_\sfa(t)}) = 0$
and hence \ $\lim_{\, t \rightarrow \infty} \, 
\omega_{T_\sfa} (W(g) \alpha_{\Lambda_\sfa (t)}(W(f))) =
\omega_{T_\sfa} (W(g)) \, \omega_{T_\sfa} (W(f))$, completing the proof.  
\end{proof}

Thus, in the presence of constant acceleration, an observer 
can still describe the macroscopic properties of the  
equilibrium states in his laboratory in terms of relations  
$T_\sfa \mapsto \omega_{T_\sfa}(A)$ involving his intensive observables,  
which are parametrised by the equilibrium parameter 
$T_\sfa > 0$. In contrast to the inertial situation, the
equilibrium states are not spatially homogeneous, however.  
Note that the semigroup of spatial translations 
$\RR^3_+ \doteq \{ \yb : y_1 \geq 0 \}$ maps the wedge $\Wc$ into itself. 

\begin{Prop}
Let $\, \omega_{T_\sfa}$, $T_\sfa > 0$, be any KMS state, defined above
relative to the automorphic action of the dynamics 
$\alpha_{\Lambda_\sfa(t)}$, $t \in \RR$  on the algebra $\Ac(\Wc)$.  
If $\omega_{T_\sfa}$ is invariant
under the semigroup of spatial translations, \viz 
$\omega_{T_\sfa} \circ\alpha_\yb =  \omega_{T_\sfa}$, $\yb \in \RR^3_+$, \   
then $T_\sfa = \sfa/2\pi$ and $\omega_{T_\sfa}$ coincides with 
the restriction of the inertial vacuum state $\omega_0$ 
to $\Ac(\Wc)$. 
\end{Prop}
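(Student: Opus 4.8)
The plan is to reduce everything to the two–point functions and then to probe the state far from the horizon. Since $\omega_{T_\sfa}$ and $\omega_0\upharpoonright\Ac(\Wc)$ are both quasifree states on the Weyl algebra $\Ac(\Wc)$, each is determined by its two–point function, and two such states coincide iff their two–point functions agree on $\Dbox(\Wc)$. In the same way the hypothesised invariance $\omega_{T_\sfa}\circ\alpha_\yb=\omega_{T_\sfa}$ is, after polarisation, equivalent to the invariance $v_{T_\sfa}(f_\yb,g_\yb)=v_{T_\sfa}(f,g)$ for all $\yb\in\RR^3_+$ and $f,g\in\Dbox(\Wc)$; since $\RR^3_+$ maps $\Wc$ into itself, all translates $f_\yb$ stay in $\Dbox(\Wc)$, so these expressions are well defined. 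Thus the whole assertion reduces to showing that an $\RR^3_+$–invariant $v_{T_\sfa}$ must coincide with the vacuum two–point function $v_{\sfa/2\pi}$: for then $\omega_{T_\sfa}=\omega_{\sfa/2\pi}=\omega_0\upharpoonright\Ac(\Wc)$, and as distinct temperatures give distinct KMS states this forces $T_\sfa=\sfa/2\pi$.

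I would set $D\doteq v_{T_\sfa}-v_{\sfa/2\pi}$. Both summands are $\RR^3_+$–invariant: $v_{T_\sfa}$ by hypothesis, and $v_{\sfa/2\pi}$ because, by the quoted identification $\omega_{\sfa/2\pi}=\omega_0\upharpoonright\Ac(\Wc)$, it is the restriction of the fully Poincar\'e–invariant vacuum two–point function. Hence $D(f_\yb,g_\yb)=D(f,g)$ for all $\yb\in\RR^3_+$. The core claim is
$$ \lim_{y_1\to\infty} D(f_{\yb},g_{\yb}) = 0 , \qquad \yb=(y_1,0,0) , $$
which together with the invariance $D(f,g)=D(f_\yb,g_\yb)$ immediately yields $D\equiv 0$ and finishes the argument.

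To establish the core claim I would use the explicit form \eqref{kms2}. Translating $f,g$ by $\yb=(y_1,0,0)$ multiplies the integrand in \eqref{kms2} by the phase $e^{\,i y_1\,(p_1-(\Lambda_\sfa(v)p)_1)}$, and since the $\sfa/2\pi$ expression carries exactly the same phase and the same mass–shell factor $\varepsilon(p_0)\delta(p^2)\widehat f(-p)\widehat g(\Lambda_\sfa(v)p)$, the difference $D(f_\yb,g_\yb)$ is this phase times the single spectral weight $B(u)\doteq(1-e^{-u/T_\sfa})^{-1}-(1-e^{-2\pi u/\sfa})^{-1}$ in the $u$–integration. For each fixed $v\neq 0$ the exponent $\Psi_v(p)\doteq p_1-(\Lambda_\sfa(v)p)_1$ is a non-constant linear functional of $p$ on the mass shell, so the $p$–integral tends to $0$ by the Riemann–Lebesgue lemma; the contribution therefore concentrates at $v=0$ and, in the conjugate spectral variable $u$, is driven out to $u\to\pm\infty$ (schematically the $v$–integration forces $u\approx-y_1\sfa p_0$). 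Because $B(u)\to 0$ as $u\to\pm\infty$ — the two thermal factors agree asymptotically at high boost energy, where no state distinguishes temperature — dominated convergence gives $D(f_\yb,g_\yb)\to 0$.

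The main obstacle is precisely this asymptotic analysis: one must justify rigorously the concentration of the oscillatory double integral and the interchange of limits, controlling the infrared point $p=0$ on the mass shell (harmless since $f,g\in\Dbox(\Wc)$ are smooth after smearing and one works modulo $\square\,\Dc$) and the mild singularity of $B$ at $u=0$ (integrable, as the relevant spectral density vanishes there, exactly as exploited in the proof of Proposition~\ref{centrallimits}). Physically this is the Tolman–Ehrenfest mechanism: far from the horizon the redshifted temperature tends to zero, so every accelerated equilibrium state looks locally like the inertial vacuum, and homogeneity propagates this back to the whole wedge, leaving only the vacuum, \ie $T_\sfa=\sfa/2\pi$.
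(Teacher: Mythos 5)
Your reduction to two--point functions is legitimate for this Proposition (the states in question are the quasifree ones fixed by \eqref{kms1}--\eqref{kms2}, and your polarisation step works because the antisymmetric part of $v_{T_\sfa}$ is the translation--invariant commutator function $\kappa$), and the logical skeleton --- $\RR^3_+$--invariance of $D\doteq v_{T_\sfa}-v_{\sfa/2\pi}$ together with $\lim_{y_1\to\infty}D(f_\yb,g_\yb)=0$ forces $D\equiv 0$, and strict monotonicity of $T_\sfa\mapsto v_{T_\sfa}(f,f)$ then pins down $T_\sfa=\sfa/2\pi$ --- is valid. The genuine gap is the core claim itself, at exactly the point you flag as ``the main obstacle'': the stationary--phase/dominated--convergence sketch is not a proof. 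Concretely, for $T_\sfa\neq\sfa/2\pi$ the weight $B(u)$ has a simple pole at $u=0$, so dominated convergence cannot be invoked as stated; the loss of decay of the $p$--integral in the window $|v|\lesssim 1/y_1$ couples to \emph{all} values of $u$ through the factor $e^{-iuv}$, and you supply no non--stationary--phase estimate, no dominating function, and no control, uniform in $\yb$, of the vanishing at $u=0$ of the translated spectral density (the appendix's argument gives this pointwise for each pair $f_\yb,g_\yb$, not uniformly). A clean, non--circular way to close the gap within your own scheme is to bypass the oscillatory integral entirely: the states $\omega_{T_\sfa}$ are dilation invariant and locally normal with respect to $\omega_0$ (quasifree Hadamard), so by dilation covariance the distance of $\omega_{T_\sfa}$ and $\omega_0$ over the translated localisation region equals their distance over a double cone shrinking to a point, where local normality and the absence of nontrivial operators localised at a point (Wightman) force it to vanish --- this is precisely the Roberts--type argument the paper uses for its subsequent proposition on local norm distances, whose proof does not depend on the homogeneity statement.

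It is also worth noting that, even if completed, your route is genuinely different from the paper's and strictly narrower in scope. The paper argues structurally: boosts and spatial translations generate the semigroup $\overline{\Wc}\rtimes\Bc_\sfa$; invariance plus the Reeh--Schlieder property give a unitary representation in the GNS space which extends to $\RR^4\rtimes\Bc_\sfa$, so $\Omega_{T_\sfa}$ is invariant under \emph{all} translations; the KMS property identifies the boosts with the modular group of $\bigl(\pi_{T_\sfa}(\Ac(\Wc))^-,\Omega_{T_\sfa}\bigr)$, whence the two lightlike translation semigroups form half--sided modular inclusions and Wiesbrock's theorem yields positivity of their generators; thus $\Omega_{T_\sfa}$ is a translation--invariant ground state for inertial time translations, and uniqueness of the scale--invariant inertial ground state of the massless field identifies $\omega_{T_\sfa}$ with $\omega_0\upharpoonright\Ac(\Wc)$. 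That argument never touches the explicit form of $v_{T_\sfa}$, which is what lets the authors claim that spatial inhomogeneity of non--vacuum wedge KMS states is a model--independent feature; your computation, tied to the quasifree family, would establish the Proposition as stated (and is essentially the ``explicit computation'' the paper alludes to after the proof) but not that stronger moral. As submitted, however, the central analytic limit is asserted rather than proven.
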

\begin{proof}
The boosts $\Bc_\sfa \doteq \{ \Lambda_\sfa(t) : t \in \RR \}$ 
and spatial translations
$\RR^3_+ = \{\biy : y_1 \geq 0 \}$ generate the semigroup
$\overline{\Wc} \rtimes \Bc_\sfa$, 
where $\overline{\Wc} = \{y : y_1 \geq |y_0| \geq 0 \} \subset \RR^4$
denotes the cone of spacetime translations acting as 
endomorphisms on $\Wc$. 
Since $\omega_{T_\sfa}$ is invariant under the automorphic actions of the  
boosts and spatial translations, it is also invariant under the 
action of the semigroup, \ie 
$\omega_{T_\sfa} \circ \alpha_{\Sigma} = \omega_{T_\sfa}$, 
\mbox{$\Sigma \in \overline{\Wc} \rtimes \Bc_\sfa$}. 
Proceeding to the GNS-representation
$(\pi_{T_\sfa},\Hc_{T_\sfa},\Omega_{T_\sfa})$ induced by $\omega_{T_\sfa}$,
this implies that there exists a continuous unitary representation \
$U_{T_\sfa} : \overline{\Wc} \rtimes \Bc_\sfa 
\rightarrow \Bc(\Hc_{T_\sfa})$ \ given by \ 
$U_{T_\sfa}(\Sigma) \, \pi_{T_\sfa}(A) \Omega_{T_\sfa} 
\doteq \pi_{T_\sfa}(\alpha_{\Sigma}(A)) \Omega_{T_\sfa}$,
$A \in \Ac(\Wc)$. (The isometries $U_{T_\sfa}(\Sigma)$
have a dense range due to the Reeh-Schlieder property 
of $\Omega_{T_\sfa}$, cf.~\cite{StVeWo}.) Since the group 
$(\overline{\Wc} - \overline{\Wc}) = \RR^4$ is abelian
and the boosts~$\Bc_\sfa$ normalise $ \overline{\Wc}$,   
one can consistently extend the unitary representation $U_\sfa$ of
\ $\overline{\Wc} \rtimes \Bc_\sfa$  
to \ $\RR^4 \rtimes \Bc_\sfa$, 
putting $U_\sfa(x - y, \Lambda_\sfa) \doteq U_\sfa(x) U_\sfa(y)^{-1} 
U_\sfa(\Lambda_\sfa)$ for $x,y \in \overline{\Wc}$, $\Lambda_\sfa \in \Bc_\sfa$. 

Consider now the lightlike translations $l_\pm =  (\pm l, l,0,0)$, $l \in \RR$, 
satisfying  $\Lambda_\sfa(t) \, l_\pm = e^{\pm ta} \, l_\pm $, $t \in \RR$.
Thus 
$\alpha_{l_\pm}(\Ac(\Wc)) = \Ac(\Wc + l_\pm) \subset \Ac(\Wc)$
for any given $l>0$ and 
$\alpha_{\Lambda_\sfa(t)}(\Ac(\Wc + l_\pm))
\subset \Ac(\Wc + l_\pm)$ for $\pm t \geq 0$ and $l>0$.
Since $\omega_{T_\sfa}$ is a KMS state, 
$t \rightarrow U_\sfa(\Lambda_\sfa(t))$ is (after rescaling
of $t$) the modular group associated with the pair 
$(\pi_{T_\sfa}(\Ac(\Wc))^-,\Omega_{T_\sfa})$, where the bar ${}^-$ denotes
closure in the weak operator topology. 
Hence for any $l > 0$  the algebras 
$U_\sfa(l_\pm) \pi_{T_\sfa}(\Ac(\Wc))^- U_\sfa(l_\pm)^{-1} \subset 
\pi_{T_\sfa}(\Ac(\Wc))^-$ form ``half-sided modular inclusions'' 
which implies that the one-parameter groups
$l \mapsto U_\sfa(l_\pm)$, $l \in \RR$ have 
positive and negative
generators, respectively, cf.\ \cite{Wiesbrock}. Thus, putting $l = 1/2$, 
the generator of the group of inertial 
time translations $t \mapsto U_\sfa(t(l_+ - l_-))$, $t \in \RR$,  
is positive and $\Omega_{T_\sfa} \in \Hc_{T_\sfa}$ is a ground 
state for it. Since the free massless scalar field has a unique
scale invariant inertial ground state, \viz the 
inertial vacuum $\omega_0$, this shows that 
the given state $\omega_{T_\sfa}$ coincides
with the restriction $\omega_0 \upharpoonright \Ac(\Wc)$,
proving that~$T_\sfa = \sfa/2 \pi$. 
\end{proof}

The preceding result can be established in the present model also 
by explicit computations. Our general argument, however,    
shows that the spatial inhomogeneity of equilibrium states in a 
uniformly accelerated laboratory is a model independent feature. In the 
case at hand, we are dealing with equilibrium states of a 
relativistic ideal gas, where one 
expects a one-to-one correspondence between the local (empirical) 
temperature and the asymptotic expectation values of 
suitable intensive observables~$A$. The fact
that $\bix \mapsto \omega_{T_\sfa}(\alpha_{\bisx}(A))$
is in general not constant 
therefore already suggests that the global parameter $T_\sfa$
cannot be interpreted as temperature. This 
point will be substantiated
in the subsequent discussion, where it is shown that in the present model 
all equilibrium states coincide
at sufficiently large distance from the edge of
the wedge $\Wc$ (the horizon).
For the proof of this assertion we make use of the following  
facts.

\medskip 
\noindent (i) \ All KMS states $\omega_{T_\sfa}$, $T_\sfa > 0$, 
on $\Ac(\Wc)$ are invariant under the automorphic action of the 
dilations $(0,1,\lambda) \in \ProrW$, $\lambda > 0$, \ie
$\omega_{T_\sfa} \circ \alpha_\lambda = \omega_{T_\sfa}$ in short hand
notation. This is easily inferred from the definition
of the states and the fact that the dilations commute with 
Lorentz transformations. 

\medskip 
\noindent (ii) 
The restrictions of the states~$\omega_{T_\sfa} \upharpoonright \Ac(\Oa)$,
$T_\sfa > 0$, to the algebra of any relatively compact region 
$\Oa \subset \Wc$ are normal with respect to each other, \ie
they are continuous on the unit ball of the algebra 
in the weak operator topology induced by any one of these states. 
This follows from the fact that the KMS states $\omega_{T_\sfa}$, 
$T_\sfa > 0$, are quasifree Hadamard states~\cite{SahlmVer} and that
such states are locally normal with respect to each 
other \cite{Ver-QE}. 
The latter fact implies that the  
states can locally be interpreted in terms of ensembles 
described by density matrices in the Fock space of the
inertial vacuum.
But, similarly to the inertial case, different KMS states are not normal 
(in fact, disjoint) with respect to each other on the whole algebra~$\Ac(\Wc)$. 

\medskip 
\noindent (iii) \ Finally,  let $\Oa_{R,r} \subset \Wc$ be the double cone 
centred at $\bio_R = (0,R,0,0)$ with spherical basis of radius $0 < r < R$ 
at time $x_0 = 0$. The dilations act on this region according
to $\lambda \Oa_{R,r} = \Oa_{\lambda R, \lambda r}$, $\lambda > 0$. 
Because of the invariance of the thermal states
under the automorphic action of the 
dilations one obtains for their local norm distance
$ \| \omega_{T_\sfa} - \omega_{T^\prime_\sfa} \|_{\, \Oa_{R,r}} 
\doteq \sup_{A \in \Ac(\Oa_{R,r})} \, 
|\omega_{T_\sfa}(A) - \omega_{T^\prime_\sfa}(A)|/\|A\| $
the equalities 
$ \| \omega_{T_\sfa} - \omega_{T^\prime_\sfa} \|_{\, \Oa_{R,r}} 
= \| \omega_{T_\sfa} \circ \alpha_\lambda 
 - \omega_{T^\prime_\sfa} \circ \alpha_\lambda \|_{\, \Oa_{R,r}} 
= \| \omega_{T_\sfa} - \omega_{T^\prime_\sfa} \|_{\, \Oa_{\lambda R, \lambda r}}$,
$\lambda > 0$.
After these preparations we can establish the following fact, 
which relies on arguments given by Roberts in~\cite{Roberts}.

\begin{Prop}
Let $\omega_{T_\sfa}, \omega_{T^\prime_\sfa}$ be any pair of KMS states 
on the algebra $\Ac(\Wc)$ with regard to the automorphic action of the 
dynamics $\alpha_{\Lambda_\sfa(t)}$, $t \in \RR$.
Then \ $\lim_{\, r/R \rightarrow 0} \
\| \omega_{T_\sfa} - \omega_{T^\prime_\sfa} \|_{\, \Oa_{R,r}} = 0$. 
In particular, the norm distance vanishes in this limit for 
any ${T_\sfa} > 0$ and fixed ${T^\prime_\sfa} = a/2\pi$, where 
$\omega_{T^\prime_\sfa}$ coincides with the restriction of the
inertial vacuum $\omega_0$ to the wedge algebra, 
$\omega_{T^\prime_\sfa} = \omega_0 \upharpoonright \Ac(\Wc)$. 
\end{Prop}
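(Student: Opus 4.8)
The plan is to collapse the two-parameter family of regions to a single monotone limit by using dilation invariance, and then to show that this limit is zero by feeding local normality into a Roberts-type scaling argument. First I would record, from fact~(iii), that the quantity $D(r/R) \doteq \| \omega_{T_\sfa} - \omega_{T^\prime_\sfa} \|_{\, \Oa_{R,r}}$ depends only on the ratio $r/R$, since the dilations $\alpha_\lambda$ carry $\Oa_{R,r}$ onto $\Oa_{\lambda R,\lambda r}$ while fixing both KMS states, so the supremum defining the local norm distance is left unchanged. Moreover $\Oa_{R,r_1}\subset\Oa_{R,r_2}$ for $r_1<r_2$ yields the inclusion $\Ac(\Oa_{R,r_1})\subset\Ac(\Oa_{R,r_2})$ of the algebras over which that supremum runs, so $r/R\mapsto D(r/R)$ is monotonically increasing and the limit $D_0\doteq\lim_{r/R\to 0}D(r/R)=\inf_{\lambda>0}D(\lambda)$ exists. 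Everything then reduces to proving $D_0=0$.

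Next I would pass to a single representation carrying both states as normal functionals. Fixing $R=1$ and the interior base point $\bio_1=(0,1,0,0)$, I consider the concentric double cones $\Oa_{1,r}$ shrinking to $\bio_1$ as $r\to 0$. By fact~(ii) the restrictions of $\omega_{T_\sfa}$ and $\omega_{T^\prime_\sfa}$ to any fixed $\Ac(\Oa_{1,r_0})$ are mutually normal, so in the GNS representation $(\pi_{T^\prime_\sfa},\Hc_{T^\prime_\sfa},\Omega_{T^\prime_\sfa})$ of $\omega_{T^\prime_\sfa}$ there is a density matrix $\rho$ with $\omega_{T_\sfa}(A)=\Tr\bigl(\rho\,\pi_{T^\prime_\sfa}(A)\bigr)$ for all $A$ localized in $\Oa_{1,r_0}$, whereas $\omega_{T^\prime_\sfa}$ is the vector state of $\Omega_{T^\prime_\sfa}$. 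The local norm distance $D(r)$ is then the norm of the difference of these two normal functionals restricted to the decreasing von Neumann algebras $\Rc_r\doteq\pi_{T^\prime_\sfa}(\Ac(\Oa_{1,r}))^{-}$. The input from Roberts~\cite{Roberts} now enters through two features: since the massless free field admits no nontrivial observables localized at the single point $\bio_1$, one has $\bigcap_{r>0}\Rc_r=\CC\,1$; and since both states are dilation invariant (fact~(i)), the dilations are implemented by unitaries $V(\lambda)$ fixing $\Omega_{T^\prime_\sfa}$, so the normal extension of $\omega_{T_\sfa}$ is itself invariant under the implemented scaling. It is this rigidity, rather than mere normality, that Roberts exploits to convert the triviality of the pointlike limit algebra into genuine norm convergence of the restricted states.

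The step I expect to be the main obstacle is precisely this last one, namely deducing $\lim_{r\to 0}\,\bigl\|\,(\omega_{T_\sfa}-\omega_{T^\prime_\sfa})\!\upharpoonright\!\Rc_r\,\bigr\|=0$: monotonicity together with mutual normality does \emph{not} by itself force the infimum $D_0$ to vanish, and one genuinely needs the scaling covariance to trivialize the perturbation in the limit, which is the content of Roberts' argument. Granting $D_0=0$, the first assertion follows at once, and the second is obtained by specialising $\omega_{T^\prime_\sfa}$ to the dilation-invariant, locally normal state with $T^\prime_\sfa=\sfa/2\pi$, which by the previous proposition is the restriction $\omega_0\!\upharpoonright\!\Ac(\Wc)$. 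As a consistency check within the present model one may verify this limit directly: because both states are quasifree Hadamard states with the same commutator, the difference of their two-point functions has a smooth kernel near the interior point $\bio_1$, so the Hilbert--Schmidt magnitude governing the norm distance of quasifree states concentrates to zero as the region shrinks, in agreement with the abstract conclusion.
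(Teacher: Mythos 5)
Your proposal sets up the right reduction (dilation invariance collapses the two parameters to the ratio $r/R$, fixing the shrinking family $\Oa_{1,\,r/R}$ about the interior point $(0,1,0,0)$), but it then leaves the decisive step --- $\lim_{r/R\to 0}\|\omega_{T_\sfa}-\omega_{T^\prime_\sfa}\|_{\,\Oa_{1,\,r/R}}=0$ --- explicitly ``granted'', and your diagnosis of what is needed to close it is incorrect. You assert that ``monotonicity together with mutual normality does \emph{not} by itself force the infimum $D_0$ to vanish'' and that the dilation covariance of the states must be fed into the limit step. In fact, mutual local normality together with the triviality of the pointlike intersection algebra \emph{does} suffice, and that is exactly how the paper (following Roberts) completes the argument: after reducing via the triangle inequality to the pair $(\omega_{T_\sfa},\omega_0)$, one picks for each $r/R>0$ a near-optimizer $A_{\,r/R}\in\Ac(\Oa_{1,\,r/R})$ with $\|A_{\,r/R}\|=1$ and
$\|\omega_{T_\sfa}-\omega_0\|_{\,\Oa_{1,\,r/R}}\leq |\omega_{T_\sfa}(A_{\,r/R})-\omega_0(A_{\,r/R})|+(r/R)$.
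The family $\pi_0(A_{\,r/R})$ is uniformly bounded, so it has weak limit points; any such limit point lies in $\bigcap_{\,r/R>0}\pi_0(\Ac(\Oa_{1,\,r/R}))^-=\CC\,1$ by Wightman's theorem, hence equals $z1$ for some $z\in\CC$. Since $\omega_0$ is the vector state of $\Omega_0$ and $\omega_{T_\sfa}$ is locally normal with respect to $\omega_0$ (fact~(ii)), both functionals are continuous on the unit ball in the weak operator topology, so along any weakly convergent subnet both $\omega_0(A_{\,r/R})$ and $\omega_{T_\sfa}(A_{\,r/R})$ converge to the \emph{same} scalar $z$, and their difference tends to zero; a standard subnet argument then gives the full limit. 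No invariance of the states under the implemented dilations enters this step at all --- dilation invariance is used only once, in the scaling identity that fixes $R=1$.

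Two further points. First, you run the argument in the GNS representation of $\omega_{T^\prime_\sfa}$ and there claim $\bigcap_{r>0}\Rc_r=\CC\,1$; Wightman's point-triviality result is established in the vacuum representation, so your route requires transferring it via local quasi-equivalence, which you assert without argument --- the paper's order of steps (triangle inequality first, then work with $\pi_0$ throughout) avoids this entirely. Second, your concluding ``consistency check'' via Hilbert--Schmidt estimates for quasifree states is heuristic and does not substitute for the missing step. As it stands, the proof has a genuine gap at its central point, together with a mistaken claim about what is required to fill it.
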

\begin{proof}
Because of the triangle inequality for the norm distance
it suffices to prove the statement for the special case 
\mbox{$\omega_{T^\prime_\sfa} = \omega_0 \upharpoonright \Ac(\Wc)$}. 
Moreover, inserting into the equality 
$ \| \omega_{T_\sfa} - \omega_0 \|_{\, \Oa_{R,r}} 
= \| \omega_{T_\sfa} - \omega_0 \|_{\, \Oa_{\lambda R, \lambda r}}$
the special value $\lambda = 1/R$, one only needs to 
estimate the norm distances 
$\| \omega_{T_\sfa} - \omega_0 \|_{\, \Oa_{1, \, r/R}}$.
Copying the argument in \cite{Roberts}, 
this is accomplished by making 
use of the fact that the intersection
of the double cones $\Oa_{1, \, r/R}$, $r/R > 0$, 
consists of the single point $(0,1,0,0) \in \Wc$.
Hence the intersection of the corresponding weakly closed 
algebras in the GNS representations $(\pi_0. \Hc_0, \Omega_0)$
induced by $\omega_0$
consists of multiples of the identity, 
$\bigcap_{\,  r/R > 0} \, \pi_0(\Ac(\Oa_{1, \, r/R}))^- = \CC \, 1.$ 
The latter relation says that there exist no non-trivial 
bounded operators that are localised at a point, 
a classical result due to Wightman~\cite{Wightman}. 
Now there exists for any $r/R > 0$ an operator
$A_{\, r/R} \in \Ac(\Oa_{1, \, r/R})$ with norm $\| A_{\, r/R} \| = 1$
such that 
$\| \omega_{T_\sfa} - \omega_0 \|_{\, \Oa_{1, \, r/R}} 
\leq \big(|\omega_{T_\sfa}(A_{\, r/R}) - \omega_0(A_{\, r/R})| + (r/R) \big)$.
Since all weak limit points of the uniformly bounded 
sequence of operators 
$\pi_0(A_{\, r/R})$, $r/R \rightarrow 0$, are contained in 
$\bigcap_{\,  r/R > 0} \, \pi_0(\Ac(\Oa_{1, \, r/R}))^-$ and hence are 
multiples of the identity, and since the states~$\omega_{T_\sfa}$
are locally normal with respect to $\omega_0$, 
this implies $\lim_{\, r/R \rightarrow 0} \
\| \omega_{T_\sfa} - \omega_0 \|_{\, \Oa_{1, \, r/R}} = 0$ and the
statement follows.  \end{proof}

This result shows that the states $\omega_{T_\sfa}$, $T_\sfa > 0$,
can practically not be discriminated from the inertial vacuum
$\omega_0$ by observations in regions of arbitrarily large radius $r$ that 
are separated from the edge of the wedge $\Wc$ (the apparent horizon)
by a distance 
$R \gg r$. Thus in spite of the fact that these states correspond to 
different equilibrium parameters~${T_\sfa}$, 
one clearly must assign to them
in these remote regions the same temperature as to the inertial vacuum. 
On the other hand, the state 
$\omega_{\sfa/2 \pi} = \omega_0 \upharpoonright \Ac(\Wc) $ 
is spatially homogeneous with regard to all local observables in the 
laboratory and hence has the same temperature everywhere. 
Since the
Tolman--Ehrenfest law~\cite{To,EhTo} implies that the 
local temperature in a uniformly accelerated equilibrium state 
is proportional to its inverse 
distance from the horizon, the temperature must be zero 
everywhere in state $\omega_{\sfa/2 \pi}$.
This fact substantiates our assertion that the
temperature of the inertial vacuum remains to be zero   
in uniformly accelerated systems and that the global 
equilibrium parameters~$T_\sfa$, even if corrected 
by redshift factors, cannot directly be interpreted as temperature 
of accelerated equilibrium states. 
We refer the reader to \cite{BuSo} for a definition 
of observables indicating the effective local temperature  
of equilibrium states in the present setting, cf.\ also~\cite{BuVe} and the 
subsequent concluding remarks.

\addtolength\textheight{3mm}
\section{Conclusions}

In the present article we have studied the macroscopic 
effects of acceleration on equilibrium states, as seen by an  
observer in a rigid, spatially extended laboratory. 
The macroscopic properties of these states
are determined by local observables in the
respective laboratory system, which form central sequences
at asymptotic times.
These sequences have sharp limits, hence  
quantum fluctuations are suppressed.
It turned out that acceleration does not affect  
the macroscopic properties of an inertial 
vacuum state. Irrespective of the  
accelerated, possibly erratic motion of the
laboratory, the observer will find  
the same macroscopic properties of the vacuum 
as an inertial observer. In particular, he will
not find himself immersed in a thermal gas,
respectively heat bath. 

We have also shown that the equilibrium parameter $T_\sfa$, distinguishing 
the KMS states in a uniformly accelerated laboratory, cannot offhandedly 
be interpreted as temperature. Disregarding the particular value  
\mbox{$T_\sfa = \sfa/2\pi$}, the 
states are inhomogeneous and coincide at sufficiently large distances 
from the horizon of the observer with the inertial vacuum.
Hence, in spite of the fact that these states correspond to different
equilibrium parameters~$T_\sfa$, one must assign to them the same 
temperature in these remote regions.
Moreover, as a consequence of the Tolman-Ehrenfest law, 
the parameter $T_\sfa = \sfa/2\pi$ attributed to the spatially 
homogeneous vacuum may not be regarded as its temperature either;
in fact, the temperature must vanish throughout this state
according to this law.

An operationally meaningful definition of temperature, 
based on the concept of local thermometer observables, was 
proposed in \cite{BuSo}. 
Proceeding to the idealisation of pointlike observables, the 
simplest example of a local thermometer is, in the present setting, 
the normal ordered square of the underlying free field, 
$\Theta(x) \doteq  12 \! : \! \! \phi^2 \! \! : \! \! (x) $.  
The numerical factor is determined by calibration in the inertial 
equilibrium states $\omega_T$, yielding the expectation values \ 
$\omega_T(\Theta(x)) = T^2$, $T \geq 0$.
Taking suitably regularised time limits of this thermometer 
observable in the representation induced by any given 
accelerated KMS state  $\omega_{T_\sfa}$, $T_\sfa > 0$,
one obtains for the local temperature $T_\sfa (x_1)$ at distance $x_1$ from the 
horizon the result~\cite{BuSo}
\begin{equation} \label{additional}
T_\sfa^{\, 2}(x_1) = \omega_{T_\sfa}(\Theta(x_1)) 
=  (\sfa x_1)^{-2} \, (T_\sfa^2 - (\sfa/2 \pi)^2) \, .
\end{equation}
Thus, in accordance with the Tolman--Ehrenfest law,
one finds that $\sfa x_1 \, {T_\sfa} (x_1) = \mbox{const}$, 
$x_1 > 0$, in the given states. Note that the constant 
appearing in this law does not coincide with the KMS 
parameter~$T_\sfa$, contrary to the common {\it ad hoc} 
definition of local temperatures, where the global KMS 
parameters~$T_\sfa$ are divided through the 
local redshift factors $\sfa x_1$. In fact, 
the constant is modified by a contribution due to the 
Unruh temperature, which vanishes only in the 
classical limit. So in this approach  
the temperature attributed to the inertial 
vacuum in the accelerated laboratory turns also out to be zero, 
$T_{\sfa/2 \pi}(x_1) =  0$ for $x_1 > 0$. This result is   
in accordance with the present findings, where we did not 
rely on an {\it a priori} concept of local thermometers. 

Some contingent objection against these observations  
derives from the fact that microscopic probes, which are 
locally coupled to accelerated KMS states in order to model 
``local thermometers'',  
are driven to Gibbs ensembles corresponding to one and the same 
parameter $T_\sfa$ which characterises the respective 
underlying  macroscopic KMS state. This fact is often taken 
as an argument that $T_\sfa$ ought to be 
interpreted as temperature of that state. 
Yet, as has been outlined in the introduction, any local coupling  
does not only induce the transfer of thermal energy (heat) between
the KMS states and the probe; 
it inevitably creates also excitations of these states
because of the quantum nature of the coupling \cite[p 334]{SDC} (KMS states are 
faithful, cf.~\cite{Haag}). These quantum induced  
excitations gain energy by the acceleration that  
is partly transferred to the probes. The probes therefore indicate  
temperatures $T_\sfa$ which are higher than the local (redshifted) 
temperature of the underlying KMS state, 
$T_\sfa > \sfa x_1 T_\sfa(x_1) $, cf. relation \eqref{additional}. 

This mechanism is also effective in the inertial vacuum state. 
There the value $T_\sfa  = \sfa/2\pi$ indicated by the probe is 
entirely due to the energetic contributions of
quantum induced excitations, there are no contributions
coming from a ``vacuum gas''.
Phrased differently, instead of indicating the temperature
of the vacuum, the probe indicates its own 
temperature at asymptotic times, 
which is raised by local quantum effects during 
the measuring process \cite{SDC}. This conclusion is 
confirmed by the observation that the value of~$T_\sfa$
does not depend on the position of the probe within the  
laboratory,  \ie on its particular world line. The latter 
fact can be extracted from remarks in \mbox{\cite[pp 6531-6532]{BiMe}}
about the arbitrary choice of form factors, determining the position of 
the probe within the laboratory. As already noticed by these authors, 
this feature is at variance with the conventional interpretation
of the Unruh effect. 
So the Unruh effect is a quantum induced systematic contribution,  
which appears in certain specific measuring procedures of temperature,  
but, as explained, may be avoided by others~\cite{BuSo}. Its 
popular thermal interpretation is not tenable, however.

The upshot of the present investigation, going beyond the case 
considered here, is the insight that the parameters  $T_\ast$, characterising
KMS states, may in general not directly be interpreted as temperatures 
and that probes (Unruh-de Witt detectors), which reliably determine these 
parameters, may therefore not be regarded as perfect thermometers. In the 
presence of acceleration or, equivalently, gravitation and also of 
curvature, cf. \cite{BuSch,So}, the parameters $T_\ast$ subsume information
about the local temperature of the underlying equilibrium states as well as
of these other local data, cf.\ relation~\eqref{additional}. Phrased 
differently, the value of $T_\ast$ dictates the relation between 
these local parameters which is required to obtain global equilibrium.
For inertial systems or small accelerations and curvatures, $T_\ast$ may 
safely be identified with temperature; to give an example, for terrestrial 
acceleration $\sfa = 9.8 m/s^2$ and local temperature $T = 300 \, {}^0 \! K$
the corresponding KMS parameter $T_\sfa$ indicated by a probe would be,
according to theory, about $2.7 \, 10^{-42} \, {}^0 \! K$ higher. 
But in the neighbourhood of huge masses or black holes these systematic effects  
can no longer be neglected and the interpretation of thermal properties
of states should then no longer be based directly on the  
KMS parameters $T_\ast$. Further investigations of this issue
therefore seem warranted. 

\addtolength\textheight{-3mm}
\section*{Appendix}

We supply in this appendix the proof that the functionals 
$\omega_{T_\sfa}$ on the algebra $\Ac(\Wc)$, generated by the 
free scalar massless field, which are defined in 
relations \eqref{kms1} and \eqref{kms2}, are KMS states 
with regard to the automorphic action of the 
dynamics $\alpha_{\Lambda_\sfa(t)}$, $t \in \RR$.
In doing so we rely on the well-known fact that
the state $\omega_{\sfa/2 \pi}$ coincides with the restriction
of the inertial vacuum state $\omega_0 \upharpoonright \Ac(\Wc)$, 
which satisfies the KMS-condition corresponding to the 
equilibrium parameter ${\sfa/2 \pi}$, c.f.  
\cite{Fulling, Sewell, BiWi}. 

\medskip
We begin by noting that the commutator function of the 
free massless theory can be presented in the form \  
$\kappa(f,g) = \langle f , g \rangle_0  - \langle g , f \rangle_0$,
where $ \langle f , g \rangle_0$
denotes the scalar product of the single particle vectors 
$|f\rangle_0, |g\rangle_0 \in \Hc_0$, \ $f,g \in \Dbox(\RR^4)$; 
these are generated  from~$\Omega_0$ by the smeared free field operators 
in the GNS representation \ $(\pi_0, \Hc_0, \Omega_0)$ 
induced by $\omega_0$. 
Restricting these quantities to test functions 
$f,g \in \Dbox(\Wc)$ we make use of the following 
facts.

\medskip 
\noindent (i) For any given $f,g \in \Dbox(\Wc)$, having relatively 
compact supports in $\Wc$, the support of $g_{\Lambda_\sfa(t)^{-1}}$
is timelike separated from the support of $f$ for sufficiently 
large $|t|$, cf.\ the proof of Lemma \ref{geometry}. Thus, because 
of Huygens' principle, the function 
$t \mapsto \kappa(f,g_{\Lambda_\sfa(t)^{-1}})$, $t \in \RR$, has compact support.
It is also arbitrarily often differentiable (since the free field 
is an operator-valued distribution) and hence is a test function. 

\medskip 
\noindent (ii) The boosts are 
unitarily implemented in the vacuum representation,
\viz $U_0(\Lambda_\sfa(t)) = e^{i t \sfa  K}$, \mbox{$t \in \RR$},
where $K$ is the canonical generator of the boosts. Since 
$\Omega_0$ is invariant under the action of these unitary operators, 
one obtains  $ \langle f , g_{\Lambda_\sfa(t)^{-1}} \rangle_0 
=  \langle f , e^{i t \sfa  K} g \rangle_0$, $t \in \RR$.
Moreover, as  $\omega_0 \upharpoonright \Ac(\Wc)$
satisfies the KMS-condition with respect to the action of 
$\alpha_{\Lambda_\sfa (t)}$ for the
equilibrium parameter $\sfa/2 \pi$, it follows that the 
vectors $|f\rangle_0, |g\rangle_0$ are in the domain of 
$e^{- \pi K}$ and that the KMS boundary condition can be 
presented for the single particle vectors in the form 
$\langle e^{- \pi K} f , e^{- \pi K} g_{\Lambda_\sfa(t)^{-1}} \rangle_0
= \langle g_{\Lambda_\sfa(t)^{-1}}, f \rangle_0$, $t \in \RR$.
So one arrives at the equality 
$\big( \langle f, e^{i t \sfa  K} g \rangle_0 -
\langle e^{- \pi K} f ,  e^{i t \sfa  K} e^{- \pi K} g \rangle_0 \big)
= \kappa(f, g_{\Lambda_\sfa(t)^{-1}})$, $t \in \RR$, for
any $f,g \in \Dbox(\Wc)$. 

\medskip
\noindent (iii) One now makes use of the spectral decomposition
of the selfadjoint generator $K$, which has Lebesgue absolutely  
continuous spectrum 
on the single particle space. Denoting by $E(k)$, $k \in \RR$, its
spectral resolution, one obtains the equality 
$\int \! e^{it \sfa k} (1 - e^{-2\pi k}) \, d \langle f, E(k) g \rangle_0
= \kappa(f, g_{\Lambda_\sfa(t)^{-1}})$, $t \in \RR$. Since  
$t \mapsto \kappa(f, g_{\Lambda_\sfa(t)^{-1}})$ is a test function, its 
Fourier transform \
\mbox{$k \mapsto 
\sqrt{2\pi} \, (1 - e^{-2\pi k/ \sfa}) \,  d \langle f, E(k/\sfa) 
g \rangle_0 / dk  $} \
is a (Schwartz) test function as well and vanishes at the origin
for any $f,g \in \Dbox(\Wc)$.  
Incidentally, this establishes the property used in the 
last part of the proof of Proposition \ref{centrallimits}. 

\medskip
\noindent (iv) Now the KMS boundary condition given above
implies  \ $e^{-2\pi k} \, d \langle f, E(k) g \rangle_0
= d \langle g, E(-k) f \rangle_0$ on $\RR$ 
for any $f, g \in \Dbox(\Wc)$. Since 
the function $k \mapsto (1 - e^{-2\pi k}) \, (1 - e^{- \sfa k/T_\sfa})^{-1}$
is positive and continuous for any $T_\sfa > 0$, it
follows that the real bilinear forms, given in \eqref{kms2}, 
can be presented as 
$$
v_{T_\sfa}(f,f) = 
\! \int \! (1 - e^{-2\pi k}) \, (1 - e^{- \sfa k/T_\sfa})^{-1} \, 
d \langle f, E(k) f \rangle_0 \, , \quad f \in \Dbox(\Wc) \, ,
$$
and hence are well-defined. Moreover, 
they can canonically be extended in both entries 
to  sesquilinear forms on the complex 
linear space $(\Dbox(\Wc) + i \, \Dbox(\Wc))$.
Since the functions $k \mapsto \langle f, E(k) f \rangle_0$ are 
continuous and monotonously increasing 
for any $f \in (\Dbox(\Wc) + i \, \Dbox(\Wc))$, 
they determine positive measures on~$\RR$. Hence~$v_{T_\sfa}$ 
defines a positive scalar product on 
$(\Dbox(\Wc) + i \, \Dbox(\Wc))$ for any $T_\sfa > 0$. 
By the KMS boundary condition 
and a change of variables
one also obtains for any $f,g \in \Dbox(\Wc)$ 
\begin{equation*}
\begin{split}
v_{T_\sfa}(f,g) - v_{T_\sfa}(g,f) = &
\int \! (1 - e^{-2\pi k}) \, (1 - e^{- \sfa k/T_\sfa})^{-1} \, 
\big( d \langle f, E(k) g \rangle_0  - d \langle g, E(k) f \rangle_0 \big) \\
= & \int \! \big( (1 - e^{-2\pi k}) \, (1 - e^{- \sfa k/T_\sfa})^{-1} 
- (1 - e^{ 2\pi k}) \, (1 - e^{\sfa k/T_\sfa})^{-1} \,  e^{- 2\pi k}  \big)  \, 
d \langle f, E(k) g \rangle_0  \\ 
=  & \int \! (1 - e^{-2\pi k}) \, d \langle f, E(k) g \rangle_0
\ = \ \kappa(f,g) \, .
\end{split}
\end{equation*}
Hence $v_{T_\sfa}$ defines a two-point function of the free massless
scalar field for any $T_\sfa > 0$.

\medskip
\noindent (v) It remains to prove that the functions 
$t \mapsto v_{T_\sfa}(f, g_{\Lambda_\sfa(t)^{-1}}) $ comply with the
KMS-condition. Making use of step (iv), 
one obtains for $f,g \in \Dbox(\Wc)$ and $t \in \RR$
$$
v_{T_\sfa}(f, g_{\Lambda_\sfa (t)^{-1}}) = 
\! \int \! e^{it \sfa k} \,  (1 - e^{- ak/T_\sfa})^{-1} \, (1 - e^{-2\pi k}) \, 
d \langle f, E(k) g \rangle_0  \, .
$$
Bearing in mind that \ $k \mapsto (1 - e^{-2\pi k}) \, 
d \langle f, E(k) g \rangle_0 / dk $ is a test function that
vanishes at $k = 0$, it is then apparent that 
$t \mapsto v_{T_\sfa}(f, g_{\Lambda_\sfa(t)^{-1}})$ can be continued to a function
that is  continuous and bounded 
on the strip $\{ z : 0 \leq \mbox{Im} \, z \leq 1/T_\sfa  \}$
and analytic in its interior. Its boundary value at the upper rim of
this strip is given by 
\begin{equation*}
\begin{split}
  & \int  \! e^{it \sfa k} \, e^{-ak/T_\sfa} \, 
(1 - e^{- ak/T_\sfa})^{-1} \, (1 - e^{-2\pi k}) \, 
d \langle f, E(k) g \rangle_0 \\
= & \int \! e^{-it \sfa k} \, e^{ak/T_\sfa} \, 
(1 - e^{ak/T_\sfa})^{-1} \, (1 - e^{2\pi k}) \, 
d \langle f, E(-k) g \rangle_0 \\
= & \int \! e^{-it \sfa k} \, 
(1 - e^{- ak/T_\sfa})^{-1} \, (1 - e^{-2\pi k}) \,  e^{- 2\pi k} \,
d \langle g, E(k) f \rangle_0  \  = \ 
v_{T_\sfa}(g_{\Lambda_\sfa (t)^{-1}},f) \, ,
\end{split}
\end{equation*}
proving the KMS property of $v_{T_\sfa}$, $T_\sfa > 0$. 

\medskip
\noindent (vi)
After these preparations it follows 
by standard arguments, cf. for example \cite{MaVe},
that the functionals~$\omega_{T_\sfa}$, defined in equation~\eqref{kms1} in terms
of the two-point function $v_{T_\sfa}$, are states on the 
algebra $\Ac(\Wc)$ of the free massless scalar field. For, making 
use of the Weyl relations, one has for 
any $ f_i, f_k \in \Dbox(\Wc)$, $i,k = 1, \dots, N$,
$$ 
\omega_{T_\sfa} (W(f_i)^* W(f_k)) =
e^{\kappa(f_i,f_k)/2} \, e^{-v_{T_a}(f_k-f_i,f_k-f_i)/2 }
= e^{ - v_{T_a}(f_i,f_i)/2} \, e^{v_{T_a}(f_i,f_k)} \,  e^{ - v_{T_a}(f_k,f_k)/2} \, . 
$$
Since the Hadamard products \ $v_{T_a}(f_i,f_k)^n$ 
of the positive matrix $v_{T_a}(f_i,f_k)$, $i,k = 1, \dots , N$, 
are again positive matrices for any $n \in \NN$, 
it is apparent that the functional $\omega_{T_\sfa}$ satisfies the condition of 
positivity. 
Moreover, the exponential function is entire analytic,
hence the functions
$$
t \mapsto \omega_{T_\sfa}(W(f) \, \alpha_{\Lambda_{\sfa}(t)}(W(g))
= e^{ - v_{T_a}(f,f)/2}  e^{- v_{T_a}(f,g_{\Lambda_\sfa (t)^{-1}} )} 
e^{ - v_{T_a}(g,g)/2} \, , \quad f,g \in \Dbox(\Wc) \, ,
$$
satisfy the KMS-condition for the dynamics $\alpha_{\Lambda_\sfa(t)}$,
$t \in \RR$, 
since the two-point function $v_{T_a}$ does. This completes the proof
that $ \omega_{T_\sfa}$ is a KMS state for any temperature
$T_\sfa > 0$.

\medskip
We conclude this appendix by noting that, by similar arguments, 
one can also establish the existence of a ground state 
for this dynamics in case of the free massless  scalar field.

\section*{Acknowledgment}
We acknowledge correspondence with Stephen A.\ Fulling,  
William G.\ Unruh and Robert M.\ Wald which induced us to 
include in this article remarks on the relation between 
our results and the Tolman--Ehrenfest~law.
We are also grateful to Ko Sanders for constructive 
comments on a preliminary version.

\end{document}